\newcommand{\mats}{\ensuremath{\mathcal{S}}}
\newcommand{\mata}{\ensuremath{\mathcal{A}}}
\newcommand{\matb}{\ensuremath{\mathcal{B}}}
\newcommand{\matn}{\ensuremath{\mathcal{N}}}
\newcommand{\matg}{\ensuremath{\mathcal{G}}}
\newcommand{\matW}{\ensuremath{\mathcal{W}}}
\newcommand{\mreals}{\ensuremath{\mathbb{R}}}
	\newcommand{\eqref}[1]{~(\ref{#1})}
	\def\mod{\mathop{\rm mod}}
\newcommand{\vect}[1]{{\bf #1}}
\newcommand{\bigo}[1]{O\left(#1\right)}
\newcommand{\Qinv}[1]{Q^{-1}\left(#1\right)}
\newcommand{\Var}[1]{\mathrm{Var}\left(#1\right)}
\newcommand{\E}[1]{\mathbb E\left[#1\right]}
\newcommand{\Prob}[1]{\mathbb P\left[#1\right]}
\newcommand{\1}[1]{1\left\{#1\right\}}
\def\argmin{\mathop{\rm argmin}}
\def\exp{\mathop{\rm exp}}
\def\EE{\mathbb{E}\,}
\def\PP{\mathbb{P}}
\def\eqdef{\stackrel{\triangle}{=}}
\def\sff{\mathsf f}
\def\sfd{\mathsf d}
\def\unifto{\mathop{{\mskip 3mu plus 2mu minus 1mu%
	\setbox0=\hbox{$\mathchar"3221$}%
	\raise.6ex\copy0\kern-\wd0%
	\lower0.5ex\hbox{$\mathchar"3221$}}\mskip 3mu plus 2mu minus 1mu}}
\def\simleq{{{\mskip 3mu plus 2mu minus 1mu%
	\setbox0=\hbox{$\mathchar"013C$}%
	\raise.2ex\copy0\kern-\wd0%
	\lower0.9ex\hbox{$\mathchar"0218$}}\mskip 3mu plus 2mu minus 1mu}}
\def\simleq{\lesssim}
\def\simgeq{{{\mskip 3mu plus 2mu minus 1mu%
	\setbox0=\hbox{$\mathchar"013E$}%
	\raise.2ex\copy0\kern-\wd0%
	\lower0.9ex\hbox{$\mathchar"0218$}}\mskip 3mu plus 2mu minus 1mu}}
\def\simgeq{\gtrsim}
\newif\ifmapx
\edef\jobnametmp{\expandafter\string\csname jscc_fb_apx\endcsname}
\edef\jobnameapx{\expandafter\mkillslash\jobnametmp}
\edef\jobnameexpand{\jobname}
\long\def\apxonly#1{\ifmapx{\color{blue}#1}\fi}
\newtheorem{thm}{Theorem}
\newtheorem{defn}{Definition}
\theoremstyle{remark}
\newtheorem{remark}{Remark}
\title{Joint source-channel coding with feedback}
\author{%
{\small\begin{minipage}{\linewidth}\begin{center}
\begin{tabular}{ccc}
{\large{Victoria Kostina}} 
& 
{\large{Yury Polyanskiy}}
&
{\large{Sergio Verd\'{u}}}\\
California Institute of Technology & 
Massachusetts Institute of Technology &
Princeton University \\
Pasadena, CA 91125 
& 
 Cambridge, MA 02139
&
Princeton, NJ 08544\\
\url{vkostina@caltech.edu} &
\url{yp@mit.edu}
& \url{verdu@princeton.edu}
\end{tabular}
\end{center}\end{minipage}}
\thanks{
This work was supported in part by the National Science Foundation (NSF)
under Grant CCF-1016625, by the NSF CAREER award under Grant CCF-1253205, and by the Center for Science of Information
(CSoI), an NSF Science and Technology Center, under Grant CCF-0939370.
}
}
\begin{document}
\maketitle
\begin{abstract}
This paper quantifies the fundamental limits of variable-length transmission of a general (possibly analog) source over
a memoryless channel with noiseless feedback, under a distortion constraint. We consider excess distortion, average distortion and
guaranteed distortion ($d$-semifaithful codes). 
In contrast to the asymptotic fundamental limit, 
a general conclusion is that allowing variable-length codes and feedback leads to a sizable
improvement in the fundamental delay-distortion tradeoff. In addition, we investigate the minimum energy
required to reproduce $k$ source samples with a given fidelity after transmission over a memoryless Gaussian channel, and we show that
the required minimum energy is reduced with feedback and an average (rather than maximal) power constraint.
\end{abstract}

\begin{IEEEkeywords}
Variable-length coding, joint source-channel coding, lossy compression, single-shot method, finite-blocklength regime, rate-distortion theory, feedback, memoryless channels, energy-distortion tradeoff.
\end{IEEEkeywords}

\section{Introduction}

A famous result due to Shannon \cite{shannon1956zeroerror} states that feedback cannot increase the capacity of
memoryless channels. Burnashev \cite{burnashev1976data} showed that feedback improves the error exponent in a
variable-length setting. Polyanskiy et al. \cite{polyanskiy2011feedback} showed that allowing variable-length coding and
non-vanishing error-probability $\epsilon$ boosts the $\epsilon$-capacity of the discrete memoryless channel (DMC) by a
factor of $1 - \epsilon$. Furthermore, as shown in \cite{polyanskiy2011feedback}, if both feedback and variable-length coding
are allowed, then the asymptotic limit $\frac C {1 - \epsilon}$ is approached at a fast speed $\bigo{ \frac{\log \ell}{
\ell}}$ as the average allowable delay $\ell$ increases:\footnote{Unless explicitly noted, all $\log$ and $\exp$ in this paper are to arbitrary
matching base, which also defines units of entropy, information density and mutual information.}
\begin{equation}
(1 - \epsilon)\log M^\star(\ell, \epsilon) = \ell C + \bigo{ \log \ell } \label{intro:ch},
\end{equation}
where $M^\star(\ell, \epsilon)$ is the maximum number of messages that can be distinguished with average error probability
$\epsilon$ at average delay $\ell$, and $C$ is the channel capacity.  This is in contrast to channel coding at fixed
blocklength $n$ where, in most cases, the optimum speed of convergence of the maximal rate to capacity is $\bigo{\frac 1 {\sqrt n}}$, even when feedback is available, see \cite{polyanskiy2011feedback,fong2015AWGNfeedback}. Thus,
variable-length coding with feedback (VLF) not only boosts the $\epsilon$-capacity of the channel, but also markedly
accelerates the speed of approach to it. Moreover, zero-error communication is possible at an average rate arbitrarily
close to capacity via variable-length coding with feedback and termination (VLFT) codes, a class of codes that employs a special termination symbol to signal the end of
transmission, which is always recognized error-free by the receiver \cite{polyanskiy2011feedback}. As discussed in
\cite{polyanskiy2011feedback}, the availability of zero-error termination symbols models that common situation in which timing
information is managed by a higher layer whose reliability is much higher than that of the payload. 

In \cite{kostina2015varrate}, we treated variable-length data compression with nonzero excess distortion probability. In particular, we showed that in fixed-to-variable-length compression of a block of $k$ i.i.d. source outcomes, the minimum average encoded length $\ell^\star(k, d, \epsilon)$ compatible with probability $\epsilon$ of exceeding distortion threshold $d$ satisfies, under regularity assumptions, 
\begin{equation}
\ell^\star(k, d, \epsilon) = (1 - \epsilon) k R(d) -   \sqrt{\frac{k \mathcal V(d)}{2 \pi} } \mathrm{e}^{- \frac 1 2 \Qinv{\epsilon}^2 } + \bigo{\log k},  \label{intro:s}
\end{equation}
where $R(d)$ and $\mathcal V(d)$ are the rate-distortion and the rate-dispersion functions, and $Q$ is the standard normal complementary cumulative distribution function.  
The second term in the expansion~\eqref{intro:s} becomes more natural if one notices that for $Z \sim \mathcal N(0, 1)$, 
\begin{equation}
 \EE[Z  \cdot 1\left\{Z > \Qinv{\epsilon} \right\}] = {1\over \sqrt{2\pi}} \mathrm{e}^{-\frac 1 2 \Qinv{\epsilon}^2}.
\end{equation}
As elaborated in \cite{kostina2015varrate},
the expansion~\eqref{intro:s} has an  unusual feature: the asymptotic fundamental limit is approached from the ``wrong'' side, e.g. larger dispersions and shorter blocklengths reduce the average compression rate.

In this paper, we consider variable-length transmission of a general (possibly analog) source over memoryless channels with feedback,
under a distortion constraint. This variable-length joint source-channel coding (JSCC) setting can be viewed as a
generalization of the setups in \cite{polyanskiy2011feedback,kostina2015varrate}, which, as explained above, analyze channel coding and source coding, respectively. Related work includes an assessment of the
nonasymptotic fundamental limits of fixed-length JSCC in \cite{kostina2012jscc,campo2011random,wang2011dispersion},  a dynamic programming formulation of zero-delay JSCC with feedback in \cite{javidi2013dynamicJSCC}, and a practical variable-length
almost lossless joint compression/transmission scheme in \cite{caire2004almost}. Various feedback coding strategies are discussed in \cite{ovseevich1968capacity,fang1970optimum,chande1998joint,kafedziski1998joint,lu1999progressive,gastpar2003source,manakkal2005source} while pragramic feedback communication schemes 
that implement VLF include \cite{horstein1963sequential,ooi1998fast,ahlswede1971constructive,shayevitz2011optimal,schalkwijk1967transmission}.

We treat several scenarios that differ in how the distortion is evaluated and whether a termination symbol is allowed. 
In all cases, we analyze the {\it average delay} required to achieve the objective. The results in Section~\ref{sec:main} are summarized as follows:
\begin{itemize}
\item Under the {\it average distortion} criterion, $\EE[\mathsf d(S^k, \hat S^k)] \le d$, the minimal average delay $\ell^\star(k, d)$ attainable by VLF codes
transmitting $k$ source symbols satisfies
\begin{equation}
\ell^\star(k, d)\, C  =  k R(d)  + \bigo{\log k}. \label{intro:f2vAv}
\end{equation}
\item Under the {\it excess distortion probability} criterion, $\PP[\mathsf d(S^k, \hat S^k)>d] \le \epsilon$, the minimal average delay attainable by VLF codes
transmitting $k$ source symbols satisfies
\begin{align}
\ell^\star(k, d, \epsilon)\, C  &=  (1 - \epsilon) k R(d)  -  \sqrt{\frac{k \mathcal V(d)}{2 \pi} } \mathrm{e}^{- \frac 1 2 \Qinv{\epsilon}^2 }  \notag\\
&+ \bigo{\log k} \label{intro:f2v}. 
\end{align}
\item Under the {\it guaranteed maximal distortion} criterion, $\PP[\mathsf d(S^k, \hat S^k)>d] = 0$, the minimal average delay attainable by
VLFT codes transmitting $k$ source symbols satisfies
\begin{equation}
\ell^\star_t(k, d, 0)\, C = k R(d) + \bigo{\log k} \label{intro:f2v0}. 
\end{equation}
\end{itemize}

Similar to \eqref{intro:ch}, approaching the limits in \eqref{intro:f2vAv}, \eqref{intro:f2v} and \eqref{intro:f2v0} only requires an extremely thin
feedback link, namely, the decoder sends just a single acknowledgement signal (stop-feedback) once it is ready to decode.
\footnote{Stop-feedback is not to be confused with the termination symbol, which is a special symbol that the
\emph{encoder} can transmit error-free in order to inform the decoder that the transmission has ended and it is time to decode.
} Note that~\eqref{intro:f2v} exhibits significant similarities with \eqref{intro:s}: the asymptotic limit is approached from below, i.e. in contrast to the results in
\cite{polyanskiy2010channel,kostina2011fixed,kostina2012jscc}, smaller blocklengths and larger source dispersions are
beneficial. Note also that the leading term of the expansion in \eqref{intro:f2v} can be attained with variable-length codes without feedback. 

Interestingly, naive separate source/channel coding fails to attain any of the limits mentioned. For example, it approaches the asymptotic fundamental limit from above, e.g. even the
sign of the second term in \eqref{intro:f2v} is not attainable. This observation led us to believe, initially, that
competitive schemes in this setting should be of successive refinement and adaptation sort such as in~\cite{FV87,AS93},
or dynamic programming-like as in~\cite{GC11,javidi2013dynamicJSCC}. It turns out, however, that like the fixed-length JSCC achievability schemes in \cite{campo2011random,kostina2012jscc}, attaining
limits~\eqref{intro:f2vAv}-\eqref{intro:f2v0} requires a rather simple variation on the separation architecture: one
only needs to allow a variable-length interface between the source coder and the channel coder.  Note that typically,
separation is understood in the sense that the output of the compressor is treated as pure bits, which can
be arbitrarily permuted without affecting the performance of the concatenated scheme~\cite{hochwald1997tradeoff,wang2011dispersion}, provided that an inverse permutation is applied at the input of the decompressor.  Similarly, the performance of a {\it variable-length separated scheme} is insensitive to permutations (but not additions or deletions) of the bits at the output of the source coder. These semi-joint 
achievability schemes are the subject of Section~\ref{sec:newcodes}, and form the basis for the lossy joint source-channel codes, which are the subject of  Section~\ref{sec:main}.

Energy-limited codes are the subject of Section~\ref{sec:Ecodes}.  The optimal energy-distortion tradeoff achievable in the
transmission of an arbitrary source vector over the AWGN channel is studied in Section~\ref{sec:ed}. In that setting, disposing of the restriction on the number of channel uses per source
sample, we limit the total available transmitter energy $E$ and we study the tradeoff between the source dimension $k$,
the total energy $E$ and the fidelity of reproduction. Related prior work includes a study of asymptotic
energy-distortion tradeoffs \cite{jain2012energy} and a nonasymptotic analysis of the energy per bit required to reliably
send $k$ bits through an AWGN channel \cite{polyanskiy2011minimumenergy}. The main results in Section~\ref{sec:ed} are the following:
\begin{itemize}
\item Under the \textit{average distortion} constraint, the total minimum energy required
to transmit $k$ source samples over an AWGN channel \textit{with feedback} satisfies
\begin{equation}
 E^\star_f(k, d) \cdot {\log e\over N_0} = k R(d) + O(\log k)\,, \label{intro:f2vAv2}
\end{equation}
where $\frac {N_0}{2}$ is the noise power per degree of freedom.


\item Under the \textit{excess distortion probability} constraint, the minimum energy required
to transmit $k$ source samples over an AWGN channel \textit{without feedback} satisfies
\begin{align}
 &~E^\star \left( k, d, \epsilon \right) \cdot{\log e\over N_0}  \label{intro:E:2order}\\
 =&~ k R(d) + \sqrt{ k \left( 2 R(d) \log e + \mathcal V(d)\right) } \Qinv{\epsilon} + \bigo{ \log k }.
  \notag 
\end{align}
\item Under the \textit{excess distortion probability} constraint, the total minimum average energy\footnote{The energy constraint in \eqref{intro:E:2order} is understood on a per-codeword basis. The energy constraint in \eqref{intro:E:2orderf} is understood as averaged over the
 source and noise realizations.} required
to transmit $k$ source samples over an AWGN channel \textit{with feedback} satisfies
\begin{align}
&~ E^\star_f \left( k, d, \epsilon \right) \cdot {\log e \over N_0}  \label{intro:E:2orderf}\\ 
=&~ k R(d) (1 - \epsilon)  - \sqrt{\frac{k \mathcal V(d)}{2 \pi} } \mathrm{e}^{- \frac { (\Qinv{\epsilon})^2} 2 }   + \bigo{\log k}.
 \notag
\end{align}
Like \eqref{intro:f2v}, particularizing~\eqref{intro:E:2orderf} to $\epsilon=0$ also covers the case of guaranteed distortion. The leading term in the expansion \eqref{intro:E:2orderf} can be achieved even without feedback, as long as $\epsilon > 0$ and the power constraint is understood on the average over the codebook. 
\end{itemize}

We point out the following parallels between variable-length codes and energy-limited-codes. 
\begin{itemize}
 \item Under average distortion, in both cases the fundamental limit is approached at speed $\bigo{\frac{\log k}{k}}$ (cf. \eqref{intro:f2vAv}, \eqref{intro:f2vAv2}). 
 \item Allowing a non-vanishing excess-distortion probability and variable length (or variable power) boosts the asymptotic fundamental limit by a factor of $1 - \epsilon$. 
 \item Allowing both feedback and variable length (or variable power)  leads to expansions \eqref{intro:f2v}, \eqref{intro:E:2orderf}, in which shorter blocklengths are beneficial. 
 \item As long as feedback is available, in both variable length coding with termination and average energy-limited coding, guaranteed distortion ($\epsilon = 0$) can be attained, even though the channel is noisy.  
\end{itemize}

\section{Feedback codes for non-equiprobable messages}\label{sec:newcodes}

In this section we consider joint source-channel coding assessing reliability by the probability that the (possibly non-equiprobable) message is reproduced correctly. These results lay the foundation for the analysis of joint source-channel coding under distortion constraints presented in Section \ref{sec:main}.
Our key tools are two extensions of the channel coding bounds for the DMC with feedback
from~\cite{polyanskiy2011feedback}. VLF and VLFT codes are formally defined as follows.

\begin{defn}\label{def:vlf}
A variable-length feedback code (VLF) transmitting message $W$ (taking values in $\matW$) over the channel 
$\{P_{Y_i | X^i Y^{i-1}}\}_{i = 1}^\infty$ with input/output alphabets $\mata$/$\matb$ is defined by:
\begin{enumerate}
\item A random variable $U \in \mathcal U$ revealed to the encoder and decoder before the start of the transmission.
\item A sequence of encoding functions
$\mathsf f_n \colon \mathcal U \times \matW \times \mathcal B^{n - 1} \mapsto \mathcal A$, specifying the channel inputs
\begin{equation}
X_n = \mathsf f_n \left( U, W, Y^{n-1}\right) \label{f2v:Xn}.
\end{equation}
 \item A sequence of decoding functions $\mathsf g_n \colon \mathcal U \times \mathcal B^n \mapsto  {\mathcal W}$, $n = 1, 2, \ldots$
 \item A non-negative integer-valued random variable $\tau$, a stopping time of the filtration $\mathcal F_n =
 \sigma\left\{ U, Y_1, \ldots, Y_n\right\}$, which determines the time at which the decoder output is computed:
 \begin{equation}
\widehat W = \mathsf g_{\tau}(U, Y^\tau). 
\end{equation}
The average transmission duration is $\EE[\tau]$. 
\end{enumerate}
\end{defn}

A very similar concept is that of an VLFT code:
\begin{defn} \label{def:vlft}
A variable-length feedback code with termination (VLFT) transmitting $W \in \matW$ over the channel 
$\{P_{Y_i | X^i Y^{i-1}}\}_{i = 1}^\infty$ with input/output alphabets $\mata$/$\matb$ is defined 
similarly to VLF codes with the exception that condition 4) in Definition~\ref{def:vlf} is replaced by
\begin{enumerate} 
\item[4')] A non-negative integer-valued random variable $\tau$, a stopping time of the filtration 
$\matg_n=\sigma\{ W, U, Y_1, \ldots, Y_n\}$.
\end{enumerate}
\end{defn}

The idea of allowing the transmission duration $\tau$ to depend on the true message $W$ models the practical scenarios (see~\cite{polyanskiy2011feedback})  where there
is a highly reliable control layer operating in parallel with the data channel, which notifies the decoder when
it is time to make a decision.

Two special cases of Definitions \ref{def:vlf} and \ref{def:vlft} are stop-feedback and fixed-to-variable codes:
\begin{enumerate}
\item {\em stop-feedback codes} are a special case of VLF codes where the
encoder functions $\{f_n\}_{n=1}^\infty$ satisfy:
 \begin{equation}\label{eq:nofb}
         f_n(U, W, Y^{n-1}) = f_n(U, W)\,.
 \end{equation}
Such codes require very limited communication over feedback: only a
 single signal informing the encoder to stop transmitting symbols once the decoder is ready to decode.
\item {\em fixed-to-variable codes}, defined in~\cite{VS09}, are also required to
satisfy~\eqref{eq:nofb}, while the stopping time is\footnote{As explained
in~\cite{VS09}, this model encompasses fountain codes in which the decoder can get a highly
reliable estimate of $\tau$ autonomously without the need for a termination symbol.}
\begin{equation} \tau = \inf\{n\ge 1: g_n(U, Y^n) = W\}\,,
\end{equation}
and therefore, such codes are zero-error VLFT codes. 
\end{enumerate}

For both VLF and VLFT codes, we say that a code that satisfies  $\E{\tau} \leq \ell$ and $\Prob{W \neq \widehat W} \leq \epsilon$, when averaged over $U$, message and channel, is an $(\ell,  \epsilon)$ code for 
source/channel $\left( W, \{P_{Y_i | X^i Y^{i-1}}\}_{i = 1}^\infty \right)$. 

The random variable $U$ serves as the common randomness shared by both transmitter and receiver, which is used to
initialize the codebook. As a consequence of Caratheodory's theorem, the amount of this common randomness can always be
reduced to just a few bits: as shown in \cite[Theorem 19]{polyanskiy2011feedback}, if there exists an $(\ell,
\epsilon)$ code with $|\mathcal U| = \infty$, then there exists an  $(\ell,  \epsilon)$ code with $|\mathcal U| \leq 3$.
Allowing for common randomness does not affect the asymptotic expansions, but leads to more concise expressions for
our non-asymptotic achievability bounds. Furthermore, no common randomness is needed at all if the channel is symmetric \cite[Theorem 3]{polyanskiy2011feedback}.

First, we quote an achievability result~\cite[(107)-(118)]{polyanskiy2011feedback}. Let $P_Y$ be the capacity achieving output distribution of the DMC. Denote information density as usual:
	\begin{align}\label{eq:cx1}
		\imath_{X;Y}(a; b) &\eqdef \log {P_{Y|X}(b|a)\over P_Y(b)}.
	\end{align}	
	
\begin{thm}[\!\! \cite{polyanskiy2011feedback}]\label{th:X1} For every DMC with capacity $C$, any positive integer $M$ and probability of error
$\epsilon$ there exists an $(\ell, \epsilon)$ stop-feedback code for the message $W$ taking $M$ values\footnote{Although the result in \cite{polyanskiy2011feedback} is stated for average (over equiprobable messages) error probability, in fact, it applies to non-equiprobable W. Furthermore, if we do not insist on $|\mathcal{U}|\le 3$, Theorem  \ref{th:X1} also applies to
 maximal probability of error.} such that
	$$ C \ell \le \log M + {\log {1\over \epsilon}} + a_0\,, $$
	where 
\begin{equation}
 a_0 \triangleq \max_{ x,  y } 
 \imath_{X; Y}(x; y). \label{f2v:a0x}
\end{equation}
\end{thm}
We note that a similar result is also contained in many other works, starting from~\cite{burnashev1976data} and later
in~\cite{TT06,NJ13}.

Next, we tighten Theorem \ref{th:X1} in the case of non-equiprobable messages. 
\begin{thm}\label{th:corX1}
For every DMC with capacity $C$ and random variable $W$ there exists an $(\ell, \epsilon)$ stop-feedback code for $W$
with 
\begin{equation}\label{eq:corX1}
	C\ell \le H(W) + \log {1\over \epsilon} + a_0,
\end{equation}
where $a_0$ is defined in \eqref{f2v:a0x}.  
\end{thm}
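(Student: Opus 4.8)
The plan is to re-run the stop-feedback random-coding construction underlying Theorem~\ref{th:X1}, but to equip the decoder with a \emph{message-dependent} threshold calibrated to each message's self-information. Fix a capacity-achieving input distribution $P_X$ with induced output distribution $P_Y$, and draw the codewords $\{X^\infty(w)\}_{w\in\matW}$ i.i.d.\ from $P_X^\infty$ using the common randomness $U$, exactly as in \cite{polyanskiy2011feedback}. For each candidate message $w$, I would have the decoder track the accumulated information density
\[
  S_n(w) = \sum_{i=1}^n \log\frac{P_{Y|X}(Y_i\mid X_i(w))}{P_Y(Y_i)},
\]
and assign it the threshold $\gamma_w = \log\frac{1}{P_W(w)} + \log\frac1\epsilon$. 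The stopping time is $\tau = \inf\{n\ge 1 : S_n(w)\ge \gamma_w \text{ for some } w\}$, with $\widehat W$ the message attaining the crossing. The design principle is that more probable messages receive smaller thresholds and are thus confirmed after fewer channel uses; averaging $\gamma_W$ against $P_W$ then produces $H(W)$ in place of the $\log M$ of Theorem~\ref{th:X1}.

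For the reliability analysis I would union-bound the error over the wrong messages. Since the codewords are drawn independently, for any $w\ne W$ the sequence $X^\infty(w)$ is independent of $(W,Y^\infty)$, so $\exp(S_n(w))$ is a nonnegative martingale of unit mean with respect to $\matf_n$ (the key identity being that $\sum_x P_X(x) P_{Y|X}(y\mid x)/P_Y(y) = 1$ for every fixed $y$, which forces the conditional expectation to $1$ regardless of the true output law). Doob's maximal inequality then yields $\PP[\sup_n S_n(w)\ge\gamma_w]\le \exp(-\gamma_w) = P_W(w)\,\epsilon$. Summing over the competing messages and averaging over the transmitted $W$ gives overall error probability at most $\sum_w P_W(w)\,\epsilon \le \epsilon$; the choice $\gamma_w=\log\frac1{P_W(w)}+\log\frac1\epsilon$ is precisely what makes this bound collapse to $\epsilon$ irrespective of the message distribution.

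For the delay I would use $\tau\le\tau_W$, where $\tau_w=\inf\{n:S_n(w)\ge\gamma_w\}$, together with the random-walk estimate already established in \cite[(107)--(118)]{polyanskiy2011feedback}: conditioned on $W=w$, the increments of $S_n(w)$ are i.i.d.\ with mean $C$, and the overshoot at crossing is controlled by $a_0$, so that $C\,\E{\tau_w\mid W=w}\le \gamma_w + a_0$. Taking expectations over $W$ then gives
\[
  C\,\E{\tau}\le \sum_{w} P_W(w)\left(\gamma_w + a_0\right) = H(W) + \log\frac1\epsilon + a_0,
\]
which is \eqref{eq:corX1}. The main obstacle I anticipate is making the per-message overshoot bound uniform: the estimate $C\,\E{\tau_w\mid W=w}\le\gamma_w+a_0$ must hold with the \emph{same} constant $a_0$ for every threshold $\gamma_w$, which requires the positive increments of $S_n$ to be bounded by $a_0$ independently of $w$; this is exactly where the definition \eqref{f2v:a0} of $a_0$ and the argument of \cite{polyanskiy2011feedback} are invoked. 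A secondary point to check is that the modified rule is still a stopping time of the decoder's filtration $\matf_n$, which holds because each $\gamma_w$ depends only on the known law $P_W$.
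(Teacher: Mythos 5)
Your proposal is correct and is essentially the paper's own proof: the same i.i.d.\ random-coding construction from the capacity-achieving input distribution, the same message-dependent thresholds (your rule $S_n(w)\ge\gamma_w$ with $\gamma_w=\imath_W(w)+\log\frac1\epsilon$ is identical to the paper's $I_n(w)\ge\gamma$ with common $\gamma=\log\frac1\epsilon$, since $I_n(w)=S_n(w)-\imath_W(w)$), the same delay analysis via $\tau\le\tau_W$, Doob's optional stopping with the uniform overshoot bound $a_0$, and averaging the thresholds over $P_W$ to produce $H(W)$. The only cosmetic difference is that you bound the error probability by Ville's maximal inequality applied to the unit-mean nonnegative martingale $\exp(S_n(w))$, whereas the paper uses the equivalent per-time change-of-measure bound imported from \cite{polyanskiy2011feedback}.
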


\begin{proof}If $H(W)=\infty$ then there is nothing to prove, so we assume otherwise.
Denote the information in $m$
\begin{align}		
	\imath_{W}(m) &\eqdef \log {1\over P_W(m)}\,,
\end{align}	
and note that by the memorylessness assumption,
\begin{align}
	\imath_{X^n;Y^n}(a^n; b^n) &=\sum_{i=1}^n \imath_{X;Y}(a_i; b_i)\label{eq:cx2}.
\end{align}		 
In the achievability scheme of \cite[Theorem 3]{polyanskiy2011feedback}, at time $n$ the decoder observes $Y^n$, computes $M$ information densities $\imath_{X^n; Y^n}(\vect C^n_1; Y^n), \ldots, \imath_{X^n; Y^n}(\vect C^n_M; Y^n)$, where $\vect C^n_1, \ldots, \vect C^n_M$ are the codewords, and stops the first time one of the information densities exceeds a threshold.  However, instead of one common threshold, we assign lower thresholds to the more likely messages. 

\textit{Code construction:}
We define the common randomness (revealed to the encoder and decoder before the
transmission starts) to be a random variable $U$ as follows: 
\begin{equation}
P_U \triangleq  \underbrace{P_{X^\infty} \times \ldots  \times P_{X^\infty} }_{|\mathcal{W}| }, 
\end{equation}
where $X^\infty$ consists of i.i.d. copies drawn from (any) capacity-achieving input distribution.  
A realization of $U$ defines $|\matW|$ infinite dimensional vectors $\mathbf C_m^\infty \in \mathcal A^\infty$, 
$m \in \matW$. 
To transmit $m_0\in \matW$, the encoder passes the entries of the corresponding codeword $\vect C_{m_0}^\infty$ to the channel one by one, so that the first $n$ entries of the codeword, $\vect C_{m_0}^n$, are transmitted by time $n$:
\begin{align}
X^n = 
 \mathbf C_{m_0}^n. 
\end{align}

 At time $n$, the decoder computes the values
\begin{equation}
 I_{n}(m) =   \imath_{X^n; Y^n}(\mathbf C_{m}^n ; Y^n) - \imath_{W}(m ), 
\end{equation}
The decoder defines the stopping times:
\begin{equation}
 \tau_m = \inf\{n \geq 0 \colon I_{n}(m) \geq \gamma\},
\end{equation}
where $\gamma>0$ is an arbitrary constant.  The final decision $\widehat W$ is made by the decoder at the stopping time $\tau^\star$:
\begin{align}
 \tau^\star &= \min_{m \in \matW} \tau_m \label{eq:cx3}\\
 \widehat W &= g(Y^{\tau^\star}) = \argmin_{m \in \matW} \tau_m\,.
\end{align}
where the tie-breaking rule is immaterial. 

\textit{Analysis:} We claim that, averaged over $U$, we have:
	\begin{align}  \PP[W \neq \widehat W] &\le \exp(-\gamma) \label{eq:cx5}\\
	    C\, \EE[\tau^\star] &\le H(W) + \gamma + a_0\,.\label{eq:cx6}
\end{align}

Abbreviate the stopping time of the true message as
\begin{equation}
  \tau \eqdef \tau_W.
\end{equation}
The union bound results in, cf.~\cite[Theorem 3]{polyanskiy2011feedback}: 
\begin{align} \PP[W \neq \widehat W | W=m_0] & \le \sum_{m\in \matW\setminus\{m_0\}} \PP[\tau_m \le \tau |
W=m_0].
\label{eq:cx7}
\end{align}
Due to memorylessness of the channel, $\imath_{X^n; Y^n}(X^n; Y^n) - nC$ is a martingale with jump size upper-bounded by $a_0$ (defined in \eqref{f2v:a0x}). For each $j \in \mathbb N$, $\min\{\tau, j\}$ is a bounded stopping time, so
 by Doob's optional stopping theorem (e.g. \cite[Theorem 10.10]{williams1991probability}) we have
 \begin{align}
C\, \E{\min\{\tau, j\}}  - H(W) &= \E{I_{\min\{\tau, j\}}(W)} \leq \gamma + a_0 \label{f2v:a0}.
\end{align}
By the monotone convergence theorem, it follows from \eqref{f2v:a0} that
\begin{align}
C\, \E{\tau} - H(W) \leq \gamma + a_0 \label{f2v:Etauu}.
\end{align}
Therefore, the stopping time $\tau$ has bounded expectation, and since the martingale $\imath_{X^n; Y^n}(X^n; Y^n) - nC$ also has bounded increments, Doob's optional stopping theorem applies to conclude
\begin{equation}
\E{\imath_{X^\tau; Y^\tau}(X^\tau; Y^\tau) } = C\, \E{\tau}. 
\end{equation}

Next, we have, by a change of measure argument, for every $m\neq m_0$:
\begin{equation}
 \PP[\tau_m = n | W=m_0] \le \exp\left( -\imath_W(m) - \gamma\right) \PP[\tau_m = n | W=m]\,. \label{eq:cm}
\end{equation}
Consequently, using \eqref{eq:cm}, we have
\begin{align}
	\PP[\tau_m \le \tau | W=m_0] &\le \PP[\tau_m < \infty | W=m_0]\\
	&= \sum_{n = 0}^\infty \PP[\tau_m = n | W=m_0]\\
	&\le \exp\left(-\imath_W(m) - \gamma\right)\,,\label{eq:cx4}
\end{align}	
where we used that $\Prob{\tau < \infty}= \Prob{\tau_m <\infty | W=m} = 1$, which in turn follows from~\eqref{f2v:Etauu}.
Summing~\eqref{eq:cx4} over all $m\neq m_0$ and using~\eqref{eq:cx7} we get~\eqref{eq:cx5}. Note that the reasoning in \eqref{eq:cm}--\eqref{eq:cx4} generalizes that in~\cite[(111)-(118)]{polyanskiy2011feedback} to nonequiprobable messages.

The estimate of average length in~\eqref{eq:cx6} follows from $\tau^\star \le \tau $
and~\eqref{f2v:a0}. Finally, the desired bound~\eqref{eq:corX1} follows by taking $\gamma = \log {1\over \epsilon}$.
\end{proof}

\begin{remark}\label{rm:vlsep} A slightly less sharp bound could also be derived via a  
\textit{variable-length separated scheme}: compress $W$ losslessly into a variable-length string $\{0,1\}^*$ with
average length less than $H(W)$, cf.~\cite{wyner1972upper}, then send the length via $O(\log H(W))$ channel symbols incurring in a
very small probability of error and finally send the data bits.
\end{remark}

Next, we extend the zero-error bound in~\cite[Theorem 11]{polyanskiy2011feedback} to the case of
non-equiprobable messages:
\begin{thm} For every DMC with capacity $C$ there exists a constant $a_1$ such that for every discrete random variable $W$ there exists an $(\ell, 0)$
 VLFT code with
\begin{align}
  C \ell &\leq H(W) + a_1. \label{eq:X3}
\end{align}
\label{th:X3}
\end{thm}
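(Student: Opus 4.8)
The plan is to build a \emph{fixed-to-variable} (hence VLFT) code on exactly the codebook and information densities of Theorem~\ref{th:corX1}, but with a stopping rule that exploits the defining feature of VLFT codes: that $\tau$ may depend on the true message $W$. I keep the common randomness $P_U=P_{X^\infty}\times\cdots\times P_{X^\infty}$ generating i.i.d.\ capacity-achieving codewords $\mathbf C_m^\infty$, and let the encoder transmit $X_n=\mathbf C_W^n$. The decoder maintains the running MAP estimate $\widehat W_n=\argmax_m I_n(m)$ with $I_n(m)=\imath_{X^n;Y^n}(\mathbf C_m^n;Y^n)-\imath_W(m)$, and I stop at $\tau=\inf\{n\ge0:\widehat W_n=W\}$. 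Since the encoder observes $W$, $U$ and the feedback $Y^n$, it knows the decoder's current estimate and can trigger the termination symbol the first time it is correct; thus $\tau$ is $\matg_n=\sigma\{W,U,Y^n\}$-measurable, a legitimate VLFT stopping time, and $\widehat W_\tau=W$ by construction, so the code is zero-error. The only task is to bound $\EE[\tau]$. (When $P_W$ is uniform this recovers the equiprobable bound of \cite[Theorem~10]{polyanskiy2011feedback}.)

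For the length I would condition on $W=m_0$ and write $\EE[\tau\mid W=m_0]=\sum_{n\ge0}\PP[\tau>n\mid W=m_0]$, where $\{\tau>n\}=\{\exists\,m\neq m_0:I_n(m)\ge I_n(m_0)\}$. A union bound with the change-of-measure estimate already used in Theorem~\ref{th:corX1} — namely $\PP[\imath_{X^n;Y^n}(\mathbf C_m^n;Y^n)\ge s\mid Y^n]\le e^{-s}$ for an independent capacity-achieving codeword, whose induced output law equals $(P_Y^*)^n$ — gives $\PP[I_n(m)\ge I_n(m_0)\mid\mathbf C_{m_0},Y^n]\le e^{-I_n(m_0)}P_W(m)$. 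Summing over $m\neq m_0$ and using $\sum_m P_W(m)\le1$ yields the pointwise bound $\PP[\tau>n\mid\mathbf C_{m_0},Y^n]\le\min\{1,e^{-I_n(m_0)}\}$, with $I_n(m_0)=Z_n-\imath_W(m_0)$ and $Z_n=\imath_{X^n;Y^n}(\mathbf C_{m_0}^n;Y^n)=\sum_{i\le n}\imath_{X;Y}(C_{m_0,i};Y_i)$ a random walk with i.i.d.\ increments of mean $C>0$ and magnitude at most $a_0$. Hence $\EE[\tau\mid W=m_0]\le\EE\big[\sum_{n\ge0}\min\{1,\exp(\imath_W(m_0)-Z_n)\}\big]$.

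The crux is therefore a random-walk crossing lemma: for a walk $Z_n$ of positive drift $C$ and increments bounded by $a_0$, and any level $\ell\ge0$, one has $\EE\big[\sum_{n\ge0}\min\{1,e^{\ell-Z_n}\}\big]\le\ell/C+K$ with $K$ depending only on the channel, not on $\ell$. I would prove this by splitting the occupation at level $\ell$: the expected number of indices with $Z_n<\ell$ is $\ell/C+O(1)$ by Wald/optional stopping applied to the first-passage time (overshoot at most $a_0$), while $\EE[\sum_{n:Z_n\ge\ell}e^{\ell-Z_n}]=O(1)$ uniformly in $\ell$ because, with bounded increments and positive drift, the expected occupation of each unit interval above $\ell$ is bounded (an elementary renewal estimate), so the exponential weights sum to a convergent geometric series. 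Applying the lemma with $\ell=\imath_W(m_0)$ gives $\EE[\tau\mid W=m_0]\le\imath_W(m_0)/C+K$, and averaging over $m_0$ yields $\EE[\tau]\le H(W)/C+K$, i.e.\ $C\ell\le H(W)+a_1$ with $a_1=CK$.

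I expect the crossing lemma — specifically the uniform-in-$\ell$ constant $K$ controlling the overshoot occupation — to be the main obstacle, and it is exactly what separates this argument from cruder alternatives. A naive scheme that runs the VLF code of Theorem~\ref{th:corX1} to a fixed threshold $\gamma$ and then uses the termination symbol to retransmit on error pays a multiplicative $\tfrac1{1-e^{-\gamma}}H(W)$ or, with a growing threshold, $H(W)+O(\log H(W))$; only the $\min\{1,\cdot\}$ bookkeeping above, which never commits to a single threshold, converts the $\log H(W)$ penalty into a genuine constant $a_1$. If $H(W)=\infty$ there is nothing to prove.
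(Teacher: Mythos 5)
Your proposal is correct and follows essentially the same route as the paper's proof: the same i.i.d.\ capacity-achieving codebook with common randomness, a message-dependent stopping time that is zero-error by construction under the VLFT paradigm (the paper stops when the true message's score $I_n(m)$ beats all lower-indexed scores, you stop when the running MAP estimate equals the truth — both legitimate and both analyzed by the identical union-bound/change-of-measure step, since $\sum_j P_W(j)\le 1$ in either case), leading to the same series $\EE[\tau]\le\sum_{n\ge0}\EE\left[\exp\left(-\left|\imath_{X^n;Y^n}(X^n;Y^n)-\imath_W(W)\right|^+\right)\right]$. The only substantive difference is the final step: the paper closes by citing the bound $\sum_{n\ge0}\EE\left[\exp\left(-\left|\imath_{X^n;Y^n}(X^n;Y^n)-\gamma\right|^+\right)\right]\le \gamma/C+a_1$ from \cite[(162)-(179)]{polyanskiy2011feedback}, whereas you re-derive that same estimate from scratch via your random-walk ``crossing lemma'' (first-passage/Wald below the level plus a uniform occupation bound above it), which is a valid, self-contained proof of the cited fact.
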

\apxonly{We can also prove: For every $Q_W$ we have
$$ C\ell \le \EE{\log {1\over Q_W(W)}} + a_1 $$
This could be useful to fast-track variable-length proofs: Just choose $Q_W(m) = c \exp\left(-\lambda \cdot \ell(m)\right)$
}
\begin{proof}
Without loss of generality, we assume that $H(W)<\infty$ and $W$ takes values in positive integers.
The codebook is a countable collection of infinite strings $\mathbf C_m^\infty$, $m = 1, 2, \ldots$. Given the codebook and
the realization of $W=m_0$, the encoder sends  in the first $n$ channel uses
\begin{equation}
X^n = \mathbf C_{m_0}^n.
\end{equation}
The decoder outputs $m$ error-free at the stopping time $\tau^\star$ given by:
\begin{equation}
 \tau^\star = \inf_n \left\{ I_n(m) > \max_{j \neq m} I_n(j) \right\} \label{eq:taustar},
\end{equation}
where
\begin{equation}
 I_{n}(m) = \imath_{X^n; Y^n}(\mathbf C_m^n; Y^n) - \imath_{W}(m).
\end{equation}
According to \eqref{eq:taustar}, if the true message is $m$ the transmission stops at the first instant $n$ when $I_n(m)$ exceeds all $I_n(j)$, $j \neq m$. Note that $\tau$ depends on the transmitted message $m$, as permitted by the  paradigm of VLFT codes.

{\it Analysis:}
The probability that the time to transmit message $m$ exceeds $n$ is 
\begin{align}
\Prob{\tau^\star > n | W = m} = \Prob{ \bigcup_{j \neq m} \left\{ I_n(j) > I_n(m) \right\}| W = m} \label{eq:tauexcess}.
\end{align}
Applying the random coding argument, we now assume that the codebook strings $\mathbf C_1^\infty, \mathbf C_2^\infty,
\ldots$ are drawn i.i.d. from $P_{X^\infty} = P_{\mathsf X} \times P_{\mathsf X} \times \ldots$, where $P_{\mathsf X}$ is the capacity-achieving channel input distribution.
Denoting by $\bar X^n$ an independent copy of $X^n$ and taking the expectation of the right side of \eqref{eq:tauexcess} with respect to the codebook, we have \eqref{f2v0:-A10}--\eqref{f2v0:-A1} at the bottom of the page, where we used the union bound and $\min\{1, a\} = \exp \left(- \left| \log \frac 1 a \right|^+\right)$.

\begin{table*}[!b]
\newcounter{mytempeqncnt}
\normalsize
\setcounter{mytempeqncnt}{\value{equation}}
\setcounter{equation}{\value{mytempeqncnt}}
\vspace*{4pt}
\hrulefill
 \begin{align}
 \Prob{\bigcup_{j \neq m} \left\{ \frac{ P_{W}(j) P_{Y|X}(Y|X_j)}{ P_{W}(m) P_{Y|X}(Y|X_m)} \geq 1\right\} | W = m } 
 \leq&~ 
 \mathbb E \Bigg[ \min\Bigg\{ 1, 
 \sum_{j = 1}^\infty \Prob{ \frac{ P_{W}(j) P_{Y^n|X^n}(Y|\bar X^n)}{ P_{W}(m) P_{Y^n|X^n}(Y^n|X^n)} \geq 1 \mid X^n, Y^n}\Bigg\}
 \Bigg] \label{f2v0:-A10}\\
 \leq&~ \E{ \min\left\{ 1, \sum_{j = 1}^{\infty} \frac{P_{W}(j)}{P_{W}(m )} \frac{\E{ P_{Y^n| X^n}(Y^n |\bar X^n )|Y^n }}{ P_{Y^n |X^n }(Y^n |X^n )}\right\} } \label{jscc:-Ac}\\
= &~ \E{\exp(- | \imath_{X^n; Y^n}(X^n; Y^n) - \imath_{W}(m)|^+)} \label{f2v0:-A1}, 
 \end{align}
\end{table*}

Applying \eqref{f2v0:-A1} we get 
 \begin{align}
\E{\tau^\star} &= \sum_{n = 0}^\infty \Prob{\tau^\star > n}\\
	      &\le  \sum_{n = 0}^\infty \E{\exp(- | \imath_{X^n; Y^n}(X^n; Y^n) - \imath_{W}(W)|^+)} \label{f2v0:Ag}.
\end{align}
Finally,~\eqref{f2v0:Ag} implies~\eqref{eq:X3} by applying the
result~\cite[(162)-(179)]{polyanskiy2011feedback}:
\begin{equation}
  \sum_{n=0}^\infty \E{\exp(- | \imath_{X^n; Y^n}(X^n; Y^n) - \gamma|^+)} \le {\gamma\over C} + a_1\,,
\end{equation}
where $a_1$ is a constant determined by the distribution of $\imath_{X_1; Y_1}(X_1; Y_1)$. 
\end{proof}

\section{Asymptotic expansions of the rate-distortion tradeoff}\label{sec:main}

\subsection{Definitions}\label{sec:def}
We proceed from the setup of Section \ref{sec:newcodes} where a discrete message is transmitted over the channel with feedback to a more general scenario, in which a, possibly analog, signal is transmitted over a channel with feedback, under a fidelity constraint. We will consider the following scenarios:
\begin{enumerate}
\item \textit{excess distortion probability:} A VLF code transmitting memoryless 
source $S^k \in \mats^k$ with reproduction alphabet $\hat \mats^k$ and separable distortion measure $\mathsf d \colon \mathcal S^k \times \hat {\mathcal S}^k \mapsto [0, + \infty]$
is called a $(k, \ell, d, \epsilon)$ excess-distortion code if the decoding time and the distortion satisfy
	\begin{align} \EE[\tau] &\le \ell\label{eq:def0},\\
	   \PP[\sfd(S^k, \hat S^k) > d] &\le \epsilon \label{eq:def1}.
	\end{align}
The corresponding fundamental limit is 
\begin{equation}
 \ell^\star(k, d, \epsilon) \eqdef \inf \{ \ell \colon \exists \text{ an $(k, \ell, d, \epsilon)$ VLF code}\}\,.
\end{equation}

\item \textit{average distortion:} A VLF code satisfying, instead of~\eqref{eq:def1}, an average constraint
	\begin{equation}\label{eq:def2}
		\EE[\sfd(S^k, \hat S^k)] \le d. 
\end{equation}	
	is called a $(k, \ell, d)$ average-distortion code. The corresponding fundamental limit is 
\begin{equation}
 \ell^\star(k, d) \eqdef \inf \{ \ell \colon \exists \text{ an $(k, \ell, d)$ VLF code}\}\,.
\end{equation}

\item \textit{guaranteed distortion}: A VLFT code transmitting memoryless 
source $S^k \in \mats^k$ with reproduction alphabet $\hat \mats^k$ and separable distortion metric $\mathsf d$
is called a $(k, \ell, d, 0)$ guaranteed-distortion code if it achieves $\epsilon=0$ in~\eqref{eq:def1}.
The corresponding fundamental limit is
\begin{equation}
 \ell^\star_t(k, d, 0) \eqdef \inf \{ \ell \colon \exists \text{ an $(k, \ell, d, 0)$ VLFT code}\}\,.
\end{equation}
\end{enumerate}

We will use the following notation for the various minimizations of information measures:
 \begin{align}
 {\mathbb R}_{S}(d) &\triangleq  \min_{ \substack{ P_{Z | S } \colon \mathcal S \mapsto \hat{\mathcal S}\colon  \\ \E{ \mathsf d( S, Z)} \leq d}} I(S; Z) \label{RR(d)},\\
  {\mathbb R}_{S}(d, \epsilon) &\triangleq  \min_{ \substack{ P_{Z | S } \colon \mathcal S \mapsto \hat{\mathcal S}\colon \\ \Prob{ \mathsf d( S, Z) > d} \leq \epsilon}} I(S; Z) \label{RR(eps)},\\
  H_{d, \epsilon}(S) &\triangleq \min_{\substack{ \mathsf c \colon \mathcal S \mapsto \hat {\mathcal S} \colon \\ \Prob{\mathsf d(S, \mathsf c (S) ) > d} \leq \epsilon}} H(\mathsf c (S)).  \label{eq:Hepsdelta}
\end{align}

The quantity in \eqref{eq:Hepsdelta} is referred to as the $\left(d, \epsilon \right)$-entropy of the source $S$ \cite{posner1967epsilonentropy}. 
%
%
The $(d, 0)$-entropy is also known as epsilon-entropy~\cite{posner1967epsilonentropy}:\footnote{N.B. in that terminology ``epsilon" corresponds to $d$, not $\epsilon$.}
\begin{equation}
 H_{d,0}(S) \triangleq \min_{\substack{ \mathsf c \colon \mathcal S \mapsto \hat {\mathcal S} \colon \\ \mathsf d(S,
\mathsf c (S) ) \leq d \text{ a.s.}}} H(\mathsf c (S)).  \label{eq:Heps}
\end{equation}

Provided that the infimum in \eqref{RR(d)} is achieved by some transition probability kernel $P_{Z^\star | S}$, the $\mathsf d$-tilted information in $s \in \mathcal S$ is defined as  \cite{kostina2011fixed}
\begin{align}
\jmath_S(s, d) &\triangleq - \log \E{\exp \left( - \lambda^\star \mathsf d(s, Z^\star) + \lambda^\star d \right) } \label{jE},
\end{align}
where\footnote{Note that the existence of $P_{Z^\star | S}$ guarantees the differentiability of $\mathbb R_S(d)$. } 
\begin{equation}
\lambda^\star = - \mathbb R^\prime_S (d).
\end{equation}
Note that in almost-lossless compression, $Z^\star = S$ and 
\begin{align}
 \jmath_S(s, d) &\triangleq \imath_S(s). 
\end{align}

\subsection{Regularity assumptions on the source}

We assume that the source, together with its distortion measure, satisfies the following assumptions: 
\begin{enumerate}[{A}1]
\item The source $\{S_i\}$ is stationary and memoryless,  $P_{S^k}  = P_{\mathsf S} \times \ldots \times P_{\mathsf S}$. \label{item:first}
\item The distortion measure is separable, $\mathsf d(s^k, z^k) = \frac 1 k \sum_{i = 1}^k \mathsf d(s_i, z_i)$. \label{item:separable}
\item The distortion level satisfies $d_{\min} < d < d_{\max}$, where $d_{\min}$ is the infimum of values at which 
 the minimal mutual information quantity ${\mathbb R}_S(d)$ 
is finite, and $d_{\max} =\inf_{\mathsf z \in \widehat {\mathcal S} } \E{\mathsf d(\mathsf S, \mathsf z)}$, where the expectation is with respect to the unconditional distribution of $\mathsf S$. 
 \label{item:dminmax}
 \item The rate-distortion function is achieved by some $P_{\mathsf Z^\star | \mathsf S}$: $\mathbb R_{\mathsf S}(d) = I(\mathsf S; \mathsf Z^\star)$. 
 \item $
 \E{{\mathsf d}^{12}(\mathsf S, \mathsf Z^\star)} < \infty
$  
where the expectation is with respect to $P_{\mathsf S} \times P_{\mathsf Z^\star}$. \label{item:last}

\end{enumerate}

The rate-dispersion function of the source satisfying assumptions A\ref{item:first}--A\ref{item:last} is given by \cite{kostina2011fixed}
\begin{equation}
\mathcal V(d) = \Var{\jmath_{\mathsf S}(\mathsf S, d)}.  
\end{equation}
 
We showed in \cite{kostina2015varrate} that under assumptions A\ref{item:first}--A\ref{item:last} for all $0 \leq \epsilon \leq 1$
 \begin{equation}
 \left.\begin{aligned}
        & {\mathbb R}_{S^k}(d, \epsilon)\\
         & H_{d, \epsilon}(S^k)
       \end{aligned}
 \right\}
 = (1 - \epsilon) k R(d) -   \sqrt{\frac{k \mathcal V(d)}{2 \pi} } \mathrm{e}^{- \frac { (\Qinv{\epsilon})^2} 2 }  +\bigo{\log k}  \label{minI2orderlossy} .
\end{equation}

\subsection{Average distortion}

\begin{thm}\label{th:A} Under assumptions A\ref{item:first}--A\ref{item:last} we have
\begin{equation}
 C\ell^*(k, d) = k R(d) + O(\log k)\,. \label{eq:A}
\end{equation}
\end{thm}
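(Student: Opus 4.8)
The plan is to establish the two matching bounds $C\ell^\star(k,d)\ge kR(d)$ and $C\ell^\star(k,d)\le kR(d)+O(\log k)$ separately. The converse turns out to be the clean direction and carries no logarithmic term, while the entire $O(\log k)$ slack lives in the achievability, where it is the price of a variable-length separation architecture in the spirit of Remark~\ref{rm:vlsep}. For the converse, fix any $(k,\ell,d)$ code; its reconstruction obeys $\E{\mathsf d(S^k,\hat S^k)}\le d$, so by the very definition \eqref{RR(d)} of the minimal mutual information, together with single-letterization for the memoryless source and separable distortion, $kR(d)=\mathbb R_{S^k}(d)\le I(S^k;\hat S^k)$. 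I would then upper bound $I(S^k;\hat S^k)$ by $C\E{\tau}$. Conditioning on the common randomness $U=u$ turns the code deterministic; since $\hat S^k$ is a function of $(U,Y^\tau)$ and $S^k\perp U$, data processing gives $I(S^k;\hat S^k)\le I(S^k;Y^\tau\mid U)$.

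For a fixed $u$ I would compare the true law of the stopped output $Y^\tau$ with the reference law under which the $Y_i$ are i.i.d.\ $\sim P_Y^\star$ and the same stopping rule is applied. Because, given $S^k$, the feedback encoder makes each $X_i$ a deterministic function of $Y^{i-1}$, the conditional density of $Y^\tau$ factorizes and the information density $\imath_{X^\tau;Y^\tau}(X^\tau;Y^\tau)$ of \eqref{eq:cx1}--\eqref{eq:cx2} equals the corresponding log-likelihood ratio; a change of measure then yields $I(S^k;Y^\tau)\le\E{\imath_{X^\tau;Y^\tau}(X^\tau;Y^\tau)}$, the residual being a nonnegative relative entropy (which stays nonnegative even when the reference is only a sub-probability because the stopping rule need not be a.s.\ finite under it). Exactly as in the proof of Theorem~\ref{th:corX1}, $\imath_{X^n;Y^n}(X^n;Y^n)-nC$ is a supermartingale with bounded increments for the filtration $\sigma(S^k,Y^n)$, so Doob's optional stopping theorem gives $\E{\imath_{X^\tau;Y^\tau}(X^\tau;Y^\tau)}\le C\E{\tau}$. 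Averaging over $u$ produces $I(S^k;\hat S^k)\le C\E{\tau}\le C\ell$, hence $C\ell^\star(k,d)\ge kR(d)$.

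For achievability I would implement a variable-length separated scheme. Pick a margin $\delta_k=\Theta(\log k/k)$ and a polynomially small $\epsilon_k$, and let $\mathsf c_k$ be a deterministic compressor attaining the $(d-\delta_k,\epsilon_k)$-entropy of $S^k$. By \eqref{minI2orderlossy}, with $\epsilon_k\to0$ the coefficient $(1-\epsilon_k)$ and the dispersion term are negligible, so $W=\mathsf c_k(S^k)$ has $H(W)\le kR(d-\delta_k)+O(\log k)$, while differentiability of $R$ at $d$ (slope $\lambda^\star$) gives $kR(d-\delta_k)=kR(d)+O(\log k)$. I would then transmit the discrete variable $W$ over the channel using the stop-feedback code of Theorem~\ref{th:corX1} with a polynomially small target error $\epsilon_k'$, at cost $C\ell\le H(W)+\log\frac{1}{\epsilon_k'}+a_0=kR(d)+O(\log k)$, and have the decoder form $\hat S^k$ from its estimate $\hat W$, reproducing the compressor's word whenever $\hat W=W$.

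It remains to verify $\E{\mathsf d(S^k,\hat S^k)}\le d$, which is where the work lies. Splitting on whether channel decoding succeeds, the success term is at most the compressor's own average distortion, namely $d-\delta_k$ plus an excess contribution bounded by $\sqrt{\E{\mathsf d^2(S^k,\hat S^k)}\,\epsilon_k}$ via Cauchy--Schwarz, and the error term is at most $\sqrt{\E{\mathsf d^2(S^k,\hat S^k)}\,\epsilon_k'}$ since $\Prob{\hat W\neq W}\le\epsilon_k'$. The twelfth-moment hypothesis A\ref{item:last} keeps the relevant second moments finite, so both contributions are $O(k^{-1/2})=o(\delta_k)$ under the polynomial choices of $\epsilon_k,\epsilon_k'$, and the total therefore stays below $d$ for all large $k$. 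The hard part is the error term: on a channel error $\hat S^k$ is a reproduction word correlated with $S^k$ through the noisy channel, so I would dominate $\mathsf d^2(S^k,\hat S^k)\le\frac1k\sum_i\mathsf d^2(S_i,\hat S_i)$ and control the resulting per-letter cross-moments against codewords drawn from $P_{\mathsf Z^\star}$, the delicate requirement being a bound uniform in $k$. Securing this uniform second-moment estimate, together with the balancing of $\delta_k,\epsilon_k,\epsilon_k'$, is the crux of the achievability.
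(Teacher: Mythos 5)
Your converse is fine and is, in substance, the paper's: the paper simply invokes data processing together with Burnashev's Lemmas 1--2, which are exactly the optional-stopping/supermartingale argument you re-derive (including the single-letterization $\mathbb R_{S^k}(d)=kR(d)$), so nothing is at stake there beyond self-containedness.

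The achievability, however, has a genuine gap, and you have located it yourself: the uniform-in-$k$ bound on the cross-moment $\E{\mathsf d^2(S^k,\hat S^k)}$ on the channel-error event is not a loose end to be ``secured'' --- it is the entire difficulty, and your proposal stops exactly where the proof has to start. The obstruction is that, conditioned on a decoding error (and on the source encoder's selection of $W$), the reproduction word $\hat S^k$ is \emph{not} distributed as a fresh draw from $P_{\mathsf Z^\star}^k$ independent of $S^k$: both the event $\{\widehat W\neq W\}$ and the encoder's minimization bias its conditional law, and for an unbounded distortion measure on an abstract alphabet there is no a priori reason the biased moment stays bounded. The paper's proof is engineered precisely to dissolve this: it uses a \emph{random} reproduction codebook $c_1,\ldots,c_M$ drawn i.i.d.\ from $P_{\mathsf Z^\star}^k$ with the minimum-distortion encoder, writes the H\"older bound \eqref{holder} with exponent $1<p\le 12$ for each fixed codebook, and only then averages over the codebook, so that the cross term collapses via the random-coding argument to $\E{\mathsf d^p(S^k,Z_1)}$ with $Z_1\sim P_{\mathsf Z^\star}^k$ \emph{independent} of $S^k$ (see \eqref{h2}); by separability and Jensen this is at most $\E{\mathsf d^p(\mathsf S,\mathsf Z^\star)}$, finite uniformly in $k$ --- which is exactly what the twelfth-moment hypothesis A\ref{item:last} under the product measure $P_{\mathsf S}\times P_{\mathsf Z^\star}$ is for. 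The paper then quotes Pilc and Yang--Zhang to get $\log M=kR(\bar d)+O(\log k)$ for the average distortion $\bar d$ of this random code and ships the (essentially equiprobable) index with Theorem~\ref{th:X1}; your variant through $H_{d-\delta_k,\epsilon_k}$ and Theorem~\ref{th:corX1} could be made to work (and taking $\epsilon_k=0$, i.e.\ a guaranteed-distortion compressor via \eqref{minI2orderlossy}, would remove the compressor's own excess term entirely), but the channel-error term still needs a decoupling device: either the codebook-averaging trick above, or a per-pair bound $\Prob{\widehat W=j\mid W=i}\le\epsilon/M$ on wrong-message decoding probabilities (which the stop-feedback code's union-bound analysis does provide) so that the wrongly decoded codeword can be replaced by a marginally independent draw before taking moments. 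Without one of these ingredients the deterministic-compressor-plus-Cauchy--Schwarz route does not close.
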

\begin{proof}

\textit{Achievability:}  fix $1 < p \leq 12$, $\epsilon > 0$, source codebook  $c_1, \ldots, c_M$ and consider a separated scheme $$S - \mathsf f(S) - X - Y - \widehat W - \mathsf c(\widehat W)$$ such that $\Prob{\mathsf f(S) \neq \widehat W} \leq \epsilon$ and 
\begin{align}
\mathsf f(s^k) &= \arg \min_{m \in \left\{1, \ldots, M\right\} } \mathsf d(s^k, c_m), \quad s^k \in \mathcal S^k\\
\mathsf c(m) &= c_m , \quad m \in \left\{1, \ldots, M\right\}
\end{align}
The average distortion is bounded by
\begin{align}
&~\E{\mathsf d \left(S^k,  c_{\widehat W } \right) } \notag\\
\leq&~ \E{\mathsf d(S^k, c_{W})  }  + \E{\mathsf d(S^k, c_{\widehat W}) \1{\widehat W \neq W}}\\
\leq&~ \E{ \min_{m \in \left\{1, \ldots, M\right\} } \mathsf d(S^k,  c_m)   }+ \left( \E{\mathsf d^{p}(S^k, c_{\widehat W}) }\right)^{\frac 1 p} \epsilon^{1 - \frac 1 p}  \label{holder}
\end{align} 
where \eqref{holder} holds by H\"{o}lder's inequality. Taking the expectation of both sides over $c_1, \ldots, c_M$ drawn i.i.d. from $P_{\mathsf Z^\star}^k$, we conclude, via a random coding argument, that there exists a separate source/channel code with average distortion bounded by
\begin{equation}
d \leq \bar d  + \mathbb E^{\frac 1 p} \left[  \mathsf d^p (S^k,  Z_1)  \right]  \epsilon^{1 - \frac 1 p} \label{h2},
\end{equation}
where we denoted
\begin{equation}
\bar d \triangleq \E{\min_{m \in \left\{1, \ldots, M\right\} } \mathsf d(S^k,  Z_m )  }, 
\end{equation}
and $P_{S^k  Z_1\ldots Z_M } = P_{S^k} P_{\mathsf Z^\star}^k \ldots P_{\mathsf Z^\star}^k$. The work of  Pilc \cite{pilc1967transmission} (finite alphabet) and Yang and Zhang
\cite{yang1999redundancy} (abstract alphabet) shows that under assumptions A\ref{item:first}--A\ref{item:last},\footnote{In particular, the bounded twelfth moment in A\ref{item:last} is required for the applicability of the result of Yang and Zhang \cite{yang1999redundancy}. } 
\begin{equation}
 \log M = kR\left(\bar d\,\right) + O(\log
k).
\end{equation}
 Letting $\epsilon = k^{ \frac {p} {p -1}}$, 
we conclude by assumption A\ref{item:last} that the second term in \eqref{h2} is bounded by $\bigo{\frac 1 k}$, i.e. $d \leq \bar d + \bigo{\frac 1 k}$. Finally, by Theorem~\ref{th:X1}$, C \ell \leq \log M + \bigo{\log k} $, and the `$\leq$' direction in \eqref{eq:A} follows by the differentiability of $R(d)$ in the region of assumption A\ref{item:dminmax}. 

\textit{Converse:}  By the data-processing inequality and~\cite[Lemma 1-2]{burnashev1976data} we have
\begin{equation}
 kR(d) \le \ell C
\end{equation}
	for any $(k, \ell, d)$ VLF code.
\end{proof}

\subsection{Excess distortion probability}


\begin{thm}\label{th:B} Under assumptions A\ref{item:first}--A\ref{item:last} and any $\epsilon>0$ we have
 \begin{equation}
\ell^\star(k, d, \epsilon)\, C  =  (1 - \epsilon) k R(d)  -  \sqrt{\frac{k \mathcal V(d)}{2 \pi} } \mathrm{e}^{- \frac { (\Qinv{\epsilon})^2} 2 }  + \bigo{\log k} \label{f2v:2order}
\end{equation}
\label{f2v:thm:2order}
\end{thm}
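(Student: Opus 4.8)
The plan is to sandwich $C\ell^\star(k,d,\epsilon)$ between matching achievability and converse bounds, both of which reduce—after invoking the sharp characterization \eqref{minI2orderlossy}—to the same second-order expansion $(1-\epsilon)kR(d) - \sqrt{k\mathcal V(d)/2\pi}\,e^{-(\Qinv{\epsilon})^2/2}$. The key observation is that \eqref{minI2orderlossy} identifies this expansion \emph{simultaneously} with the $(d,\epsilon)$-entropy $H_{d,\epsilon}(S^k)$ (the natural achievability cost) and with the minimal-information quantity $\mathbb R_{S^k}(d,\epsilon)$ of \eqref{RR(eps)} (the natural converse lower bound). Thus the whole problem becomes one of transferring these single-shot source quantities through the feedback channel at essentially their information content, which is exactly what Theorem~\ref{th:corX1} and a Burnashev-type channel converse provide.

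\emph{Achievability.} I would use a variable-length separated scheme, as in Remark~\ref{rm:vlsep}. Fix a quantizer $\mathsf c\colon \mathcal S^k \to \hat{\mathcal S}^k$ of countable range (nearly) attaining $H_{d,\epsilon_1}(S^k)$ with $\epsilon_1 = \epsilon - \tfrac1k$, and set $W = \mathsf c(S^k)$, a discrete random variable with $H(W) \le H_{d,\epsilon_1}(S^k)+O(1)$ and $\Prob{\sfd(S^k,\mathsf c(S^k))>d}\le \epsilon_1$. Transmitting $W$ over the DMC via the stop-feedback code of Theorem~\ref{th:corX1} with error probability $\epsilon_2 = \tfrac1k$ yields $\widehat W$ and reconstruction $\hat S^k = \widehat W$. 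The excess-distortion event is contained in the union of the quantizer-failure event and the channel-error event $\{\widehat W \neq W\}$, so by the union bound $\Prob{\sfd(S^k,\hat S^k)>d}\le \epsilon_1 + \epsilon_2 = \epsilon$; note that, unlike the average-distortion proof of Theorem~\ref{th:A}, no H\"older-type control of the distortion magnitude under a channel error is needed here. Theorem~\ref{th:corX1} then gives $C\ell \le H(W) + \log\tfrac1{\epsilon_2} + a_0 = H_{d,\epsilon - 1/k}(S^k) + O(\log k)$, and substituting \eqref{minI2orderlossy} together with the smoothness of $\Qinv{\cdot}$ on $(0,1)$—which makes the replacement of $\epsilon - \tfrac1k$ by $\epsilon$ cost only $O(1)$ in the first term and $O(1/\sqrt k)$ in the second—delivers the ``$\le$'' direction of \eqref{f2v:2order}.

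\emph{Converse.} I would follow the template of the converse in Theorem~\ref{th:A}, changing only the source-side bound. Since $\hat S^k$ is jointly distributed with $S^k$ and satisfies $\Prob{\sfd(S^k,\hat S^k)>d}\le\epsilon$, it is a feasible point for the minimization defining $\mathbb R_{S^k}(d,\epsilon)$ in \eqref{RR(eps)}, whence $\mathbb R_{S^k}(d,\epsilon)\le I(S^k;\hat S^k)$. On the channel side, conditioning on the common randomness $U$ (independent of $S^k$) and using that $\hat S^k$ is a function of $(U,Y^\tau)$, the data-processing inequality gives $I(S^k;\hat S^k)\le I(S^k;Y^\tau\mid U)$; the variable-length feedback channel converse of \cite[Lemma 1-2]{burnashev1976data}—the optional-stopping/martingale argument already used in the proof of Theorem~\ref{th:corX1}—then bounds this by $C\EE[\tau]\le C\ell$. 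Chaining these with the expansion of $\mathbb R_{S^k}(d,\epsilon)$ in \eqref{minI2orderlossy} yields the ``$\ge$'' direction and completes the proof.

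\emph{Main obstacle.} The conceptual steps are light once \eqref{minI2orderlossy}, Theorem~\ref{th:corX1} and Burnashev's lemma are in hand; the genuine technical care lies in the converse's channel step—justifying the optional-stopping/change-of-measure bound $I(S^k;Y^\tau\mid U)\le C\EE[\tau]$ uniformly over the common randomness and for the stopping time $\tau$ of a general VLF code, where finiteness of $\EE[\tau]$ and boundedness of the per-symbol information jumps by $a_0$ are what make the argument go through—and, on the achievability side, in verifying that shifting the operating point from $\epsilon$ to $\epsilon - \tfrac1k$ perturbs the expansion only within the $O(\log k)$ budget.
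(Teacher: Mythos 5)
Your proposal is correct and follows the paper's proof essentially step for step: achievability pairs a quantizer attaining $H_{d,\epsilon'}(S^k)$ with the stop-feedback code of Theorem~\ref{th:corX1} and evaluates via \eqref{minI2orderlossy}, while the converse chains data processing with the Burnashev-type lemma to get $\ell C \ge \mathbb{R}_{S^k}(d,\epsilon)$ and again invokes \eqref{minI2orderlossy}. The only deviation is your error split ($\epsilon-\tfrac1k$ for the source, $\tfrac1k$ for the channel, versus the paper's $\tfrac1{\sqrt k}$), which is if anything tidier: a $\tfrac1k$ perturbation of $\epsilon$ visibly costs only $O(1)$ in the term $(1-\epsilon)kR(d)$ and $O(1/\sqrt k)$ in the dispersion term, whereas the paper's $\tfrac1{\sqrt k}$ shift moves the first term by $\Theta(\sqrt k)$ and must lean on the robustness claim in its footnote to stay within the $O(\log k)$ budget.
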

\begin{proof}
\textit{Achievability:} Pair a lossy compressor $S^k \to W$ with excess-distortion probability 
$\epsilon' = \epsilon - {1\over \sqrt{k}}$ and $H(W)=H_{d,\epsilon'}(S^k)$\footnote{Although the optimal mapping $\mathsf c^\star$ that achieves $\left(d, \epsilon\right)$-entropy is not known in general,
the existence of good approximations satisfying the constraint in \eqref{eq:Hepsdelta} that approach $H_{d,
\epsilon}(S^k)$ to within $\log_2 H_{d, \epsilon}(S^k)$ bits is guaranteed by a random coding argument, see \cite{kostina2015varrate}.} with a VLF code from Theorem~\ref{th:corX1}
transmitting $W$ with probability of error ${1\over \sqrt{k}}$.
Apply \eqref{minI2orderlossy} to \eqref{eq:corX1}\footnote{Note that \eqref{minI2orderlossy} also holds if $\epsilon$ in the left side is replaced by $\epsilon + \bigo{\frac 1 {\sqrt k}}$.}. 

\textit{Converse:} Apply the data-processing inequality and~\cite[Lemma 1-2]{burnashev1976data} to get:
\begin{equation}
  \ell C \ge {\mathbb R}_{S^k}(d, \epsilon)
\end{equation}
for every $(k, \ell, d, \epsilon)$ VLF code.
\end{proof}

\subsection{Guaranteed distortion}

\begin{thm}\label{th:C} Under assumptions A\ref{item:first}--A\ref{item:last}, we have
\begin{equation}
\ell^\star_t(k, d, 0)\, C = k R(d) + \bigo{\log k} \label{f2v0:2order0}
\end{equation}
\label{f2v0:thm:2order}
\end{thm}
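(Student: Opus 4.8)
The plan is to establish \eqref{f2v0:2order0} by sandwiching $C\ell^\star_t(k,d,0)$ between a matching achievability and converse, with the achievability resting on the non-equiprobable VLFT bound of Theorem~\ref{th:X3} and the converse on a rate-distortion data-processing step combined with a stopped-martingale bound for the channel.

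For achievability I would first reduce the analog source to a discrete message via a $d$-semifaithful compressor. Concretely, let $\mathsf c^\star \colon \mathcal S^k \to \hat{\mathcal S}^k$ be a near-optimal deterministic code with $\mathsf d(S^k, \mathsf c^\star(S^k)) \le d$ almost surely whose output $W = \mathsf c^\star(S^k)$ satisfies $H(W) \le H_{d,0}(S^k) + \bigo{\log k}$; existence within $\bigo{\log k}$ bits of the $\epsilon$-entropy \eqref{eq:Heps} is guaranteed by the random-coding construction of \cite{kostina2014varrate}, the same device used in the proof of Theorem~\ref{th:B}. Feeding $W$ into the VLFT code of Theorem~\ref{th:X3} reproduces $W$ error-free with $C\ell \le H(W) + a_1$; since the termination is zero-error the decoder recovers $\mathsf c^\star(S^k)$ exactly and sets $\hat S^k = W$, so $\mathsf d(S^k,\hat S^k)\le d$ holds surely and the scheme is a $(k,\ell,d,0)$ code. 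Applying \eqref{minI2orderlossy} at $\epsilon=0$ — where the dispersion term vanishes because $e^{-(\Qinv{0})^2/2} = 0$ — gives $H_{d,0}(S^k) = kR(d) + \bigo{\log k}$, hence $C\ell \le kR(d) + \bigo{\log k}$.

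For the converse I would begin from the distortion constraint itself. Any $(k,\ell,d,0)$ VLFT code produces $\hat S^k = \mathsf g_\tau(U,Y^\tau)$ with $\mathsf d(S^k,\hat S^k)\le d$ a.s.; by tensorization of the rate-distortion function under A\ref{item:first}--A\ref{item:separable} we have $\mathbb R_{S^k}(d) = kR(d)$, and the data-processing inequality gives $kR(d) \le I(S^k;\hat S^k)$. The remaining task is to show $I(S^k;\hat S^k) \le C\,\EE[\tau] + \bigo{\log \ell}$. Using $U \perp S^k$ and the trivial chain $S^k \to (U,Y^\tau,\tau)\to \hat S^k$, I would split $I(S^k;\hat S^k) \le I(S^k;\tau\mid U) + I(S^k; Y^\tau \mid \tau, U)$, bounding the first term by the termination entropy $H(\tau) = \bigo{\log \ell}$ (the maximal entropy of a nonnegative integer variable with mean $\ell$) and the second by $C\,\EE[\tau]$ via the optional stopping theorem applied to the supermartingale $\imath_{X^n;Y^n}(X^n;Y^n) - nC$, whose increments are bounded by $a_0$ of Theorem~\ref{th:X1}, exactly the martingale device used in the proof of Theorem~\ref{th:corX1}. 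Because the optimal $\ell$ is $\Theta(k)$ by achievability, $\bigo{\log\ell} = \bigo{\log k}$, and combining yields $C\ell \ge kR(d) - \bigo{\log k}$.

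The main obstacle is the converse, and specifically the bound $I(S^k; Y^\tau\mid \tau,U)\le C\,\EE[\tau]$ for a termination time $\tau$ that is permitted to depend on the source $S^k$ — the defining feature of VLFT codes, in contrast to the output-filtration stopping times to which Burnashev's lemma (used in Theorems~\ref{th:A} and~\ref{th:B}) applies. The care needed is twofold: verifying the optional-stopping hypotheses (bounded increments and $\EE[\tau]<\infty$) so that the stopped information density has expectation at most $C\,\EE[\tau]$, and converting that stopped-density inequality into the conditional mutual-information bound while correctly isolating the $\bigo{\log\ell}$ contribution of the termination symbol. Everything else, including the rate-distortion data-processing step and the entire achievability argument, is routine given Theorem~\ref{th:X3} and \eqref{minI2orderlossy}.
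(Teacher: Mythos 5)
Your achievability half coincides with the paper's proof: pair a compressor achieving $H_{d,0}(S^k)$ with the zero-error VLFT code of Theorem~\ref{th:X3} and invoke \eqref{minI2orderlossy} at $\epsilon=0$; no issues there. The converse is where you part ways with the paper, which simply repeats the argument of \cite[Theorem 4]{polyanskiy2011feedback} with the rate-distortion quantity in place of $\log M$ to obtain \eqref{f2v:C0}, and your substitute argument has a genuine gap.

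The failing step is exactly the one you flagged as the crux: the claim $I(S^k;Y^\tau\mid\tau,U)\le C\,\EE[\tau]$ for a stopping time $\tau$ of the filtration $\sigma\{W,U,Y_1,\ldots,Y_n\}$. This inequality is false in general. Take the BSC with crossover probability $\delta\in(0,1/2)$, so $C=\log 2-h(\delta)$ with $h(\cdot)$ the binary entropy function; let $W$ be equiprobable on $\{0,1\}$, let the encoder send $X_n=W$ at every time instant, and set $\tau=\inf\{n\colon Y_n=W\}$, a valid VLFT stopping time. Then $\tau$ is geometric with mean $1/(1-\delta)$ and independent of $W$, while conditioned on $\tau=n$ the string $Y^n=(1-W,\ldots,1-W,W)$ reveals $W$ exactly, so
\begin{equation}
I(W;Y^\tau\mid\tau)=\log 2>\frac{\log 2-h(\delta)}{1-\delta}=C\,\EE[\tau],
\end{equation}
the inequality holding because $h(\delta)>\delta\log 2$ on $(0,1/2)$. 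The defect is not one of integrability or unbounded increments, so verifying optional-stopping hypotheses cannot repair it: once you condition on the realized value of $\tau$, which is correlated with the source, the outputs $Y^n$ no longer follow the channel law; and since $1\{\tau\ge n\}$ is not measurable with respect to the output filtration, the stopped-supermartingale argument (Burnashev's lemma) does not apply — besides which $\imath_{X^n;Y^n}(X^n;Y^n)-nC$ is a martingale only under i.i.d. capacity-achieving inputs, an achievability device unavailable in a converse where the encoder is arbitrary. The proof of \cite[Theorem 4]{polyanskiy2011feedback} that the paper adapts avoids this trap by never conditioning on all of $\tau$: it interleaves the termination information with the outputs, augmenting the observation at time $n$ to $(Y_n,1\{\tau\le n\})$ so that $\tau$ becomes a stopping time of the augmented filtration; each step then gains at most $C$ plus the conditional entropy of a one-bit flag, and the flag contributions telescope by the chain rule to $H(\tau)\le\log(\ell+1)+\log e$, yielding $kR(d)\le C\ell+\log(\ell+1)+\log e$ as in \eqref{f2v:C0}. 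Tellingly, in the counterexample above the combined bound $\log 2\le C\,\EE[\tau]+H(\tau)=\frac{\log 2}{1-\delta}$ does hold — it is only your decomposition into two separately bounded terms that breaks.
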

\begin{proof}For the achievability we note that the 
estimate of the $H_{d,\epsilon}(S^k)$ in~\eqref{minI2orderlossy} applies with $\epsilon=0$ and thus
\begin{equation}
 H_{d,0}(S^k) = k R(d) + O(\log k)\,.
\end{equation}
Then, we can pair the mapping achieving $H_{d,0}(S^k)$ with the zero-error VLFT code from Theorem~\ref{th:X3}. 

Conversely, repeating the argument of \cite[Theorem 4]{polyanskiy2011feedback}, with the replacement of the right side
of \cite[(67)]{polyanskiy2011feedback} by $\mathbb R_S(d, \epsilon)$ we conclude that any $(\ell, d, \epsilon)$ VLFT code must satisfy
\begin{equation}
{\mathbb R}_{S}(d, \epsilon)\leq C \ell + \log (\ell + 1) + \log e\,. \label{f2v:C0}
\end{equation}
\end{proof}

\subsection{Discussion}

We make several remarks regarding the rate-distortion tradeoff in all three settings considered
above:
\begin{enumerate}
\item The case $d = d_{\min}$ is excluded by the assumptions of Theorems \ref{th:A}--\ref{th:C}. However, in the important special case of a distortion measure that satisfies 
\begin{equation}
 \mathsf d(a, b) = 
\begin{cases}
   d_{\min}, & a = b\\
 > d_{\min}, & a \neq b
\end{cases}\, ,
\end{equation}
$d = d_{\min}$ corresponds to almost-lossless transmission, and both Theorems \ref{th:A} and \ref{th:B} apply with $R(d)$ and $\mathcal V(d)$ equal to the entropy and the varentropy of the source, respectively, as long as the source is stationary and memoryless and the third moment of $\imath_{\mathsf S}(\mathsf S)$ is finite. 

\item
For almost-lossless transmission of finite alphabet sources, the asymptotic expansion \eqref{f2v:2order} can be achieved by
reliably (i.e. with probability of error $\sim \frac 1 k$) transmitting the type of the source outcome first followed by an index that describes
the source outcome within its type class, unless the type of the source outcome is one of the least likely types with total mass $\epsilon$, in which case nothing is transmitted. 

\item Even if the channel is not symmetric, the asymptotic expansions in Theorems \ref{th:A}--\ref{th:C} can be achieved without common randomness $U$, by using constant composition channel codebooks.  For instance, consider the scheme in Theorem~\ref{th:corX1} with $P_{X^\infty}$ drawn from the distribution equiprobable over the capacity-achieving type. Since $\E{\tau | X^\infty} = \E{\tau}$ a.s., almost every such codebook attains the bound in \eqref{f2v:a0} up to logarithmic terms, resulting in a deterministic construction attaining \eqref{f2v:2order}. 

\item Stop-feedback codes are remarkably powerful at finite blocklength; indeed, up to the terms of order $\bigo{ \log k}$, they attain the fundamental limits in the settings of Theorems \ref{th:A}, \ref{f2v:thm:2order} and \ref{th:C}. As the converse parts of Theorems \ref{th:A}, \ref{f2v:thm:2order} and \ref{th:C} demonstrate, relying on feedback more heavily can only bring in an improvement of order at most $\bigo{\log k}$.

\item 
Note that \eqref{f2v:2order} is achieved by a stop-feedback code. We can further show that even without any feedback one
can still achieve the optimal first-order performance 
\begin{equation}
  \ell C \le (1-\epsilon) k R(d) + O(\sqrt{k \log k}), \label{f2v:avnof}
\end{equation}
provided variable-length channel coding is allowed. Indeed, one can first use the variable-length excess-distortion compressor
from~\cite{kostina2015varrate} on $S^k$ to get a binary string of average length $(1-\epsilon) k R(d) + O(\sqrt{k})$,
see~\eqref{intro:s}. Then, truncating the length at $k^2$ and transmitting $2\log k$ data bits with reliability ${1\over
k^2}$, we can reliably inform the encoder about the total number of data bits $b$ to be sent next. We may then use a
capacity-achieving code of length ${b\over C} + O(\sqrt{b \log b})$ to send the data bits with reliability $1\over k$ \cite{polyanskiy2010channel}.

\item The naive separation scheme, i.e. a fixed-length source code followed by a channel code achieves at most:
	\begin{equation}\label{eq:dc1}
		\ell C \ge (1-\epsilon) kR(d) + a  \sqrt{k \log k},\, a>0 .
\end{equation}	
Indeed, according to Theorem \ref{th:X1}, the number of messages $M$ that can be transmitted via a
VLF code with error probability $\eta$ satisfies
\begin{equation}
 \log M \geq \frac{\ell C}{1 - \eta} + \bigo{\log \ell} \label{f2v:2orderchannel}.
\end{equation}
On the other hand, the number of codewords of a source code with probability of exceeding distortion $d$ no greater than $\zeta$ satisfies \cite{kostina2011fixed} 
\begin{equation}
 \log M \leq k R(d) + \sqrt {k \mathcal V(d)} \Qinv{\zeta} + \bigo{1} \label{f2v:2ordersource}.
\end{equation}
Optimizing over $\eta + \zeta \le \epsilon$ yields~\eqref{eq:dc1}. \apxonly{
Optimizing $\zeta = \frac 1 {\sqrt k}$ and using $\Qinv{\zeta} \sim \sqrt{2 \log_e \frac 1 \zeta}$ as $\zeta \to 0$.} 

\item The semi-joint separated schemes that attain~\eqref{f2v:2order}   contain a vital ingredient missing from naive separated schemes: namely, the channel code employs unequal error protection. Consequently, the more likely source codewords are decoded with higher reliability, resulting in massive improvement at finite blocklength evidenced by ~\eqref{f2v:2order}.  Unequal error protection can be achieved ether via a maximum-a-posteriori-like decoder of Theorem \ref{th:corX1} or the variable-length separation interface of Remark~\ref{rm:vlsep}.

\item The Schalkwijk-Bluestein  \cite{schalkwijk1967transmission} (see also \cite{cruise1967achievement})
elegant linear feedback scheme for the transmission of a single Gaussian sample $ S \sim \mathcal N(0, \sigma^2)$ over the AWGN channel achieves the mean-square error $\frac {\sigma^2}{\left( 1 + P\right)^n}$, after $n$ channel uses, where $P$ is the average transmit SNR.  In other words, the minimum delay in transmitting a Gaussian sample over a Gaussian channel with feedback is given by
\begin{equation}
  \ell^\star(1, d) = \frac {R(d)}{C}, \label{elias}
\end{equation}
as long as $\frac {R(d)}{C}$ is integer.\footnote{If $R(d) = C$, no feedback is needed. } Note that \eqref{elias} is achieved with fixed, not variable length, and average, not maximal, power constraint. If there are $k$ Gaussian samples to transmit, repeating the scheme for each of the samples achieves
\begin{equation}
  \ell^\star(k, d) = k \frac {R(d)}{C},  \label{elias2}
\end{equation}
which implies, in particular, that in general our estimate of $\bigo{\log k}$ in \eqref{eq:A} is too conservative. Beyond Gaussian sources and channels, a sufficient condition for a fixed-length JSCC feedback scheme to achieve  \eqref{elias2} is provided in \cite{gastpar2003source}. 

\item  The Schalkwijk-Bluestein scheme uses instantaneous feedback and has notoriously resisted generalization beyond Gaussian channels, which limits the applicability of the scheme. In contrast, the simple separated scheme in Theorem \ref{th:A} uses only stop-feedback and applies to arbitrary sources and channels. 
\end{enumerate}

\apxonly{
Alternative achievability proofs:
\begin{proof}[Achievability proof A: type splitting]
First, we make the following observations. 
\begin{enumerate}
 \item By \cite[(101)]{polyanskiy2011feedback} and the observation that although its proof is focused on average error probability, it in fact applies to the worst case error probability, there exists a VLF code capable of distinguishing all source types with average delay $\ell^\prime$ and probability of error $\frac 1 {\ell^\prime}$ such that 
\begin{equation}
|\mathcal S|\log\left( k + 1 \right) \geq C \ell^\prime - \log \ell^\prime - a_0 \label{eq:Atype}
\end{equation}
where $a_0 > 0$ is a constant. 
 \item By \cite[Theorem 2]{polyanskiy2011feedback}, there exists a VLF code capable of distinguishing all source outcomes of type $p$ with probability of error $\epsilon_p$ whose average delay $\ell_p$ satisfies 
\begin{equation}
 k H(p)(1 - \epsilon_p) \geq  C \ell_p - \bigo{\log \ell_p} \label{eq:Aseq}
\end{equation}
\end{enumerate}

Let
\begin{equation}
\epsilon_P = \epsilon_P^\star - \frac{1}{\ell^\prime}
\end{equation}
where $\epsilon_P^\star$ is the error profile that achieves the minimum in \eqref{eq:minIpe}. Having observed the source outcome $s^k$ of type $p$, the encoder concatenates the codes above to send the type of the sequence followed by the index of $s^k$ in the list of all sequences of type $p$. The average error probability and average delay of this code are given by
\begin{align}
\epsilon &= \E{\epsilon_P}\\
\ell &= \E{\ell_P} + \ell^\prime 
\end{align}
and \eqref{f2v:2order} follows using \eqref{eq:Atype} and averaging \eqref{eq:Aseq} as in the proof B of Theorem \ref{thm:minIpe}.
\end{proof}
\begin{proof}[Achievability proof B: length coding] 
The advantage of this method is its applicability to countably infinite source alphabets. 
Abbreviate
\begin{equation}
1\left\{ x \leq_\alpha \eta \right\} = 1\left\{ x <_\alpha \eta \right\} + \alpha \left\{ x = \eta \right\}
\end{equation}
Let $\eta > 0$ and $0 \leq \alpha < 1$ be the unique solution to 
\begin{equation}
\epsilon = \Prob{P_{S}(S) \leq_\alpha \eta}  \label{eq:lemmaRda}
\end{equation}

First, the source outcome is encoded using the optimal variable-length code that achieves probability of error $\epsilon$. Then, the output of the variable-length encoder is transmitted through the feedback channel using the following observations.  
\begin{enumerate}
 \item By \cite[(101)]{polyanskiy2011feedback} and the observation that although its proof is focused on average error probability, it in fact applies to the worst case error probability, there exists a VLF code capable of distinguishing all encoded lengths with average delay $\ell^\prime$ and probability of error $\frac 1 {\ell^\prime}$ such that

\begin{equation}
\log \lfloor \log M_S^{+} (\eta) \rfloor \geq C \ell^\prime - \log \ell^\prime - a_0 \label{eq:Alen}
\end{equation}
where $a_0 > 0$ is a constant, and $M_S^+(\eta)$ is the number of masses whose probability is greater than or equal to $\eta$.
\item Since the number of sequences that has the same encoded length as $S^k$ is $\exp\left( \lfloor \log S^k \rfloor \right)$, by \cite[(101)]{polyanskiy2011feedback} there exists an $(\ell_{S^k}, \exp\left( \lfloor \log S^k \rfloor \right), \frac 1 {\ell_{S^k}})$ VLF code capable of distinguishing all source outcomes of   length equal to $\left \lfloor \log S^k \right \rfloor$
\begin{equation}
\left \lfloor \log S^k \right \rfloor \geq  C \ell_{S^k} - \log \ell_{S^k} - a_0  \label{eq:Aseqsamelen}
\end{equation}
\end{enumerate}
Taking the expectation of \eqref{eq:Aseqsamelen} over $S^k$, we obtain
\begin{equation}
\E{\left \lfloor \log S^k \right \rfloor 1 \left\{P_{S^k}(S^k) \geq_{1 - \alpha} \eta \right\} }  \geq  C \ell - \log \ell
\end{equation}
Since the left side is equal to 
\begin{equation}
(1 - \epsilon) k H(\mathsf S) -   \sqrt{\frac{k V(\mathsf S)}{2 \pi} } \mathrm{e}^{- \frac { (\Qinv{\epsilon})^2} 2 } + \bigo{1} 
\end{equation}
 \eqref{f2v:2order} follows.
 \end{proof}
}



\section{Energy-limited feedback codes for non-equiprobable messages}\label{sec:Ecodes}
In this section, we study the transmission of a message over an AWGN channel under an energy constraint. We would like to know how much information can be pushed through the channel, if a total of $E$ units of energy is available to accomplish the task. Formally, the codes studied in this section are  defined as follows. 
\begin{defn}
An energy-limited code without feedback for the transmission of a random variable $W$ taking values in $\matW$ over an AWGN channel is defined by:

\begin{enumerate}

\item A sequence of encoders
$\mathsf f_n \colon \matW \mapsto \mathcal A$, specifying the channel inputs
\begin{equation}
X_n = \mathsf f_n \left( W \right) 
\end{equation}
satisfying
\begin{equation}
\Prob{ \sum_{j=1}^\infty X_j^2 \le E} = 1. \label{E:Emax}
\end{equation}

\item A decoder $\mathsf g \colon \mathcal B^\infty \mapsto  {\mathcal W}$ producing $\hat W = g(Y^\infty)$, where
$\{Y_j\}$ is the output of a memoryless AWGN channel:
	\begin{equation}\label{eq:awgn_channel}
		Y_j = X_j + Z_j, \qquad Z_j \sim \matn\left(0, {N_0\over 2}\right)\,.
\end{equation}	
\end{enumerate}

\label{E:def}
 \end{defn}

\begin{defn}
An energy-limited feedback code for the transmission of a random variable $W$ taking values in $\matW$ over an AWGN channel is defined by:
\begin{enumerate}

\item A sequence of encoders
$\mathsf f_n \colon \matW \times \mathcal B^{n - 1} \mapsto \mathcal A$, defining the channel inputs
\begin{equation}
X_n = \mathsf f_n \left( W, Y^{n-1}\right) 
\end{equation}
satisfying
\begin{equation}
 \sum_{j=1}^\infty \E{X_j^2} \leq E. \label{E:Eav}
\end{equation}

\item A decoder $\mathsf g \colon \mathcal B^\infty \mapsto  {\mathcal W}$ producing $\hat W = g(Y^\infty)$, where
$\{Y_j\}$ is the output of the memoryless AWGN channel~\eqref{eq:awgn_channel}.
\end{enumerate}
\label{E:deff}
 \end{defn}

An $(E,  \epsilon)$ code for the transmission of random variable $W$ over the Gaussian channel  is a code with energy bounded by $E$ and $\Prob{W \neq \widehat W} \leq \epsilon$. 

Definitions \ref{E:def}--\ref{E:deff} do not impose any restrictions on the number of degrees of freedom $n$, restricting instead the total available energy.  
The problem of transmitting a
message with minimum energy was posed by  Shannon \cite{shannon1949communication}, who showed that $E_b$, the minimum energy per information bit compatible with vanishing block error probability converges to $N_0 \log_e 2$ as the number of information bits goes to infinity, where $\frac {N_0}{2}$ is the noise power per degree of freedom.  
Recently, Polyanskiy et al. \cite[Theorem 7]{polyanskiy2011minimumenergy} showed a dynamic programming algorithm for the error-free transmission of a single bit over an AWGN channel with feedback that attains {\it exactly} Shannon's optimal energy per bit tradeoff 
\begin{equation}
   E_b = N_0 \log_e 2. \label{E:pol0}
\end{equation}
The next non-asymptotic achievability result leverages that algorithm to transmit error-free a binary representation of a random variable over the AWGN channel by means of a variable-length separate compression/transmission scheme.
\begin{thm}
There exists a zero-error feedback code for the transmission of  a random variable $W$ over the AWGN channel with energy 
\begin{equation}
\frac{E}{N_0} \log e< H(W) + 1  \label{E:Af}.
\end{equation} 
Conversely, any $(E,0)$-feedback code must satisfy
\begin{equation}
 H(W) \leq \frac{E}{N_0} \log e. \label{E:Cf}
\end{equation}

\label{thm:ECfeedback}
\end{thm}
\begin{proof}
The encoder converts the source into a variable-length string using the Huffman code, so that the codebook is prefix-free and the expectation of the encoded length $\ell(W)$ is bounded as
\begin{equation}
\E{ \ell(W) }  < H(W) + 1\label{E:Af1}\,.
\end{equation}
Next, each bit (out of $\ell(W)$) is transmitted at the optimal energy per bit tradeoff $N_0 \log_e 2$ using the
zero-error feedback scheme in \cite[Theorem 7]{polyanskiy2011minimumenergy}. Transmissions corresponding to different bits are
interleaved diagonally (see Fig.~\ref{fig:diag}): the first bit is transmitted in time slots $1, 2, 4, 7, 11,
\ldots$, the second one in $3, 5, 8, 12, \ldots$, and so on. The channel encoder is silent at those indices allocated to source bits $\ell (W) +1, \ell(W) +2, \ldots$ For example, if the codeword has length 2 nothing is transmitted in time slots
$6, 9, 13, \ldots$. 
The receiver decodes the first transmitted bit focusing on the time slots $1, 2, 4, 7, 11, \ldots$ It proceeds
successively with the second bit, etc., 
until it forms a codeword of the Huffman code, at which point it halts.
Thus, it does not need to examine the outputs of the time slots corresponding to information bits that were not transmitted, and in which the encoder was silent.

\begin{figure}
\def\hhh{\hskip 5pt}
\centering
\begin{tabular}{r|ccccc}
Bit number & \multicolumn{5}{c}{Sequence of time slots}\\
\hline
1 & \hhh1\hhh & \hhh2\hhh & \hhh4\hhh  & \hhh7\hhh & $\cdots$ \\
2 & \hhh3\hhh & \hhh5\hhh & \hhh8\hhh & $\cdots$ \\
3 & \hhh6\hhh & \hhh9\hhh & $\cdots$ \\
$\vdots$ & \\
$\ell(W)$ & 
\end{tabular}
\caption{Illustration of the diagonal numbering of channel uses in Theorem~\ref{thm:ECfeedback}.}\label{fig:diag}
\end{figure}

Since the scheme spends $N_0 \log_e 2$ energy per bit, the total energy to transmit the codeword representing $W$ is
\begin{equation}
 \ell(W) N_0 \log_e 2 \label{E:Af2}.
\end{equation}
Taking the expectation of \eqref{E:Af2} over $W$ and applying \eqref{E:Af1}, \eqref{E:Af} follows. 

 In the converse direction, due to the zero-error requirement and data processing, $H(W) = I(W; \mathsf g(Y^\infty)) \le
 I(W; Y^\infty)$. Let $\Phi = \prod_{k=1}^\infty \matn\left(0, {N_0\over 2}\right)$ be a product measure on
 $\mreals^\infty$. It was shown in the proof of~\cite[Theorem 4]{polyanskiy2011minimumenergy} that the joint
 distribution of 
 	$$ \left( \log_e {dP_{Y^\infty|W}(Y^\infty|W)\over d\Phi(Y^\infty)},~  \sum_{k=1}^\infty X_k^2\right)\,
 	$$ 
coincides with the joint distribution of 
	$$ \left( {2\over N_0} \cdot B_\tau,~ \tau\right)\,,$$
where $\tau$ is the stopping time of the Brownian motion $B_t = {t\over 2} + \sqrt{N_0\over 2} W_t$ defined in the proof of~\cite[Theorem 4]{polyanskiy2011minimumenergy}, and $W_t$ is the
standard Wiener process. From Doob's optional stopping theorem we obtain
\begin{align}
  \EE[B_\tau] = {1\over 2} \EE[\tau] = {1\over 2} \EE \left[\sum_{k=1}^\infty X_k^2\right]\,.
\end{align}
Consequently, we have
\begin{align} I(W;Y^\infty) &\le D(P_{Y^\infty | W} \| \Phi | P_W) \\
			 & = \EE\left[\log {dP_{Y^\infty|W}(Y^\infty|W)\over d\Phi(Y^\infty)}\right] \\
			 & = {2 \log e\over N_0} \EE[B_\tau]\\
			 & \le {E\over N_0} \log e\,, \label{eq:laststep}
\end{align}			 
where \eqref{eq:laststep} follows from the energy constraint~\eqref{E:Eav}.
\end{proof}


Our next achievability result studies the performance of a variable-length separated scheme. 
\begin{thm}
 Fix positive $E_1$ and $E_2$ such that
\begin{equation}
E_1 + E_2 \leq E. \label{E:twoE}
\end{equation}
Denote
\begin{align}
&~ \varepsilon(E, m) \\
\triangleq&~  1 -  \frac 1 {\sqrt {\pi N_0}}  \int_{-\infty}^\infty  \left( 1 - Q\left( \frac{x + \sqrt E } {\sqrt{\frac{N_0}{2}}}\right) \right)^{m-1} \mathrm{e}^{ - \frac{x^2}{N_0}}\, \mathrm{d}x.  \notag
\end{align}
Assume that $W$ takes values in $\{1, 2, \ldots, M\}$. 
 There exists an $(E, \epsilon)$ non-feedback code for the transmission of random variable $W$ over an AWGN channel without feedback  such that
\begin{align}
\epsilon &\leq   \E{\varepsilon \left(E_1, W \right) } + \varepsilon(E_2, \lfloor \log_2 M \rfloor + 1) \label{E:A}.
\end{align}
\label{E:thm:A}
\end{thm}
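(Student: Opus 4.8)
The plan is to recognize $\varepsilon(E,m)$ as the maximum-likelihood error probability of $m$-ary equal-energy orthogonal (pulse-position) signaling over the AWGN channel, and then to build a two-stage non-feedback code that exploits the fact that Definition~\ref{E:def} imposes no constraint on the number of degrees of freedom. First I would verify the identification: if the $m$ codewords are the orthogonal vectors $\sqrt{E}\,e_1,\ldots,\sqrt{E}\,e_m$ (standard basis directions) and the decoder selects the coordinate of largest output, then, conditioning on the noise value $x$ in the signal coordinate (Gaussian of variance $\frac{N_0}{2}$, hence density $\frac{1}{\sqrt{\pi N_0}}e^{-x^2/N_0}$) and using that each of the $m-1$ competing coordinates is correctly beaten with probability $1 - Q\!\left(\frac{x+\sqrt E}{\sqrt{N_0/2}}\right)$, the probability of correct decoding is exactly $1-\varepsilon(E,m)$. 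Along the way I would record the monotonicity that drives the whole argument: since $0\le 1-Q(\cdot)\le 1$, the map $m\mapsto\varepsilon(E,m)$ is nondecreasing.

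Next I would describe the construction, working with the given integer labels of $W$ (which the eventual application would order by decreasing probability so as to minimize the right-hand side). Write $k(W)\triangleq\lfloor\log_2 W\rfloor$, so $k(W)\in\{0,1,\ldots,\lfloor\log_2 M\rfloor\}$ identifies the bit-length class $\{2^{k},\ldots,2^{k+1}-1\}$ containing $W$, a class of cardinality at most $2^{k}$. The encoder uses two disjoint blocks of channel uses: in the first block it transmits $k(W)$, one of $\lfloor\log_2 M\rfloor+1$ possibilities, by orthogonal signaling with per-codeword energy $E_2$; in the second block it transmits the offset of $W$ within its class, one of at most $2^{k(W)}$ possibilities, by orthogonal signaling with per-codeword energy $E_1$. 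Since exactly one codeword is active in each block, the total transmitted energy is $E_1+E_2\le E$ with probability one, so the maximal-energy constraint \eqref{E:Emax} holds. The decoder reads the first block to form $\widehat k$, then decodes the offset among the $2^{\widehat k}$ codewords of class $\widehat k$ in the second block.

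The error analysis is a union bound over the two blocks, whose noises are independent because they occupy disjoint dimensions. By the symmetry of orthogonal signaling the first block errs with probability $\varepsilon(E_2,\lfloor\log_2 M\rfloor+1)$ for every value of $k(W)$. Conditioned on $\widehat k=k(W)$ and on $W=m$, the second block errs with probability $\varepsilon(E_1,2^{k(m)})$, and here the bound $2^{k(m)}=2^{\lfloor\log_2 m\rfloor}\le m$ together with the monotonicity of $\varepsilon(E_1,\cdot)$ gives $\varepsilon(E_1,2^{k(m)})\le\varepsilon(E_1,m)$. Hence
\begin{equation}
\Prob{\widehat W\neq m\mid W=m}\le \varepsilon(E_2,\lfloor\log_2 M\rfloor+1)+\varepsilon(E_1,m),
\end{equation}
and averaging over $W$ yields \eqref{E:A}.

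I expect the main obstacle to be conceptual rather than computational: seeing that the right way to make the $m$-th message compete against only $m-1$ rivals (so that the expectation $\E{\varepsilon(E_1,W)}$, rather than the worst case $\varepsilon(E_1,M)$, controls the bound) is to transmit a coarse bit-length descriptor first and then decode inside the resulting class, with the inequality $2^{\lfloor\log_2 m\rfloor}\le m$ doing the bookkeeping. The remaining ingredients—identifying $\varepsilon(E,m)$ with the orthogonal-signaling error, checking its monotonicity in $m$, and allocating $E_1$ and $E_2$ across disjoint degrees of freedom—are routine.
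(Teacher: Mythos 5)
Your proposal is correct and follows essentially the same route as the paper's proof: a two-stage scheme that first signals the bit-length class $\lfloor\log_2 m\rfloor$ (one of $\lfloor\log_2 M\rfloor+1$ values) with energy $E_2$ and then resolves the message within that class with energy $E_1$, both via orthogonal signaling, with the union bound and the inequality $2^{\lfloor\log_2 m\rfloor}\le m$ (plus monotonicity of $\varepsilon(E,\cdot)$) yielding \eqref{E:A}. The only cosmetic difference is that the paper superimposes the two orthogonal codewords ($\mathbf c_m + \mathbf c_{m+\lfloor\log_2 m\rfloor+1}$ from one orthonormal family) rather than using disjoint blocks of channel uses, which is equivalent in energy and in the error analysis.
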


\begin{proof}
Assume that the outcomes of $W$ are ordered in decreasing probabilities. Consider the following variable-length separated achievability scheme: the source outcome $m$ is first losslessly represented as a binary string of length $\lfloor \log_2 m \rfloor$ by assigning it to the $m$-th binary string in $\{\varnothing, 0, 1, 00, 01, \ldots\}$ (the most likely outcome is represented by the empty string). Then, all binary strings are grouped according to their encoded lengths. A channel codebook is generated for each group of sequences. The encoded length is sent over the channel with high reliability, so the decoder almost never makes an error in determining that length. Then the encoder makes an ML decision only between sequences of that length. A formal description and an error analysis follow.

{\it Codebook:} 
 the collection of $M + \lfloor \log M \rfloor + 1$ codewords
\begin{align}
 \mathbf c_j &= \sqrt {E_1}\, \mathbf e_j, \ j = 1, 2, \ldots, M \\
 \mathbf c_j &= \sqrt{E_2}\, \mathbf e_j, \ j = M + 1, \ldots, M + \lfloor \log_2 M \rfloor + 1\label{ed:codebook}
\end{align}
 where $\{\mathbf e_j, \ j = 1, 2, \ldots\}$ is an orthonormal basis of $L_2(\mathbb R^\infty)$.

{\it Encoder:} The encoder sends the pair $( m, \lfloor \log_2 m \rfloor)$ by transmitting $\mathbf c_m + \mathbf c_{M + \lfloor \log_2 m \rfloor + 1}$. 

{\it Decoder:} Having received the infinite string corrupted by i.i.d. Gaussian noise $\mathbf z$, the decoder first (reliably) decides between $\lfloor \log_2 M \rfloor + 1$ possible values of $\lfloor \log_2 m \rfloor$ based on the minimum distance: 
\begin{equation}
\hat \ell \triangleq \argmin_j \| \mathbf z - \mathbf c_{M + j + 1} \|, \ j =  0, \ldots, \lfloor \log_2 M \rfloor
\end{equation}
As shown in \cite[p. 258]{wozencraft1965principles}),\cite{pyatoshin1968some}, \cite[Theorem 3]{polyanskiy2011minimumenergy}, the probability of error of such a decision is given by $\varepsilon(E, \lfloor \log_2 M \rfloor + 1)$. This accounts for the second term in \eqref{E:A}. 
The decoder then decides between $2^{\hat \ell}$ messages\footnote{More precisely, $2^{\hat \ell}$ messages if $\hat \ell \leq \lfloor \log_2 M \rfloor - 1$ and $M - 2^{\lfloor \log_2 M \rfloor} + 1 \leq 2^{\lfloor \log_2 M \rfloor}$ messages if $\hat \ell = \lfloor \log_2 M \rfloor$. } $j$ with $\lfloor \log j \rfloor = \hat \ell$:
\begin{equation}
\hat {\mathbf c} \triangleq \argmin \| \mathbf z - \mathbf c_j \|, \ j =  2^{\hat \ell}, \ldots,  \min\{ 2^{\hat \ell + 1}  - 1,  M \}
\end{equation}
The probability of error of this decision rule is similarly upper bounded by $\varepsilon \left(E, m \right)$, provided that the value of $\lfloor \log_2 m \rfloor$ was decoded correctly: $\hat \ell = \lfloor \log_2 m \rfloor$. Since $2^{ \lfloor \log_2 m \rfloor } \leq m$, this accounts for the first term in \eqref{E:A}.

\end{proof}

\apxonly{
Probability of error in random coding with MAP decoder, given $S =1 $ is transmitted, satisfies
\begin{equation}
P_{e| S = 1} = 1 - \frac 1 {\sqrt {\pi N_0}}\int_{-\infty}^\infty \prod_{m \neq 1} \left( 1 - Q\left( \frac{z + \sqrt E + \frac{ 1} {\sqrt E} \log \frac {P_{S}(1)} {P_S(m)}  } {\sqrt{\frac{N_0}{2}}}\right) \right) \mathrm{e}^{ - \frac{z^2}{N_0}} dz
\end{equation}
}

Normally, one would choose $1 \ll E_2 \ll E_1$ so that the second term in \eqref{E:A}, which corresponds to the probability of decoding the length incorrectly, is negligible compared to the first term, and the total energy $E \approx E_1$. Moreover, if $W$ takes values in a countably infinite alphabet, one can truncate it
so that the tail is negligible with respect to the first term in \eqref{E:A}. To ease the evaluation of the first term in \eqref{E:A}, one might use $i \leq \frac 1 {P_W(i)}$. In the equiprobable case, this weakening leads to $\E{\varepsilon \left(E_1, W \right) }  \leq \varepsilon \left(E_1, M \right)$.

If the power constraint is average rather than maximal, a straightforward extension of Theorem \ref{E:thm:A} ensures the existence of an $(E, \epsilon)$ code (average power constraint) for the AWGN channel with 
\begin{align}
\epsilon &\leq   \E{\varepsilon \left(E_1( \lfloor \log_2 W \rfloor ), W \right) } + \varepsilon(E_2, \lfloor \log M \rfloor + 1),\label{E:Aa}
\end{align}
where $E_1 \colon \{0, 1, \ldots, \lfloor \log_2 M \rfloor \} \mapsto \mathbb R_+$ and $E_2 \in \mathbb R_+$ are such that
\begin{equation}
\E{ E_1( \lfloor \log_2 W \rfloor ) } + E_2 \leq E. 
\end{equation}

\section{Asymptotic expansions of the energy-distortion tradeoff}
\label{sec:ed}
\subsection{Problem setup}

This section focuses on the energy-distortion tradeoff in the JSCC problem. Like in Section \ref{sec:Ecodes}, we limit the total available transmitter energy $E$ without any restriction on the
(average) number of channel uses per source sample. Unlike Section \ref{sec:Ecodes}, we allow general (not neccesarily discrete) sources, and we study the
tradeoff between the source dimension $k$, the total energy $E$ and the fidelity of reproduction. Thus, we identify the
minimum energy compatible with a given target distortion without any restriction on the time-bandwidth product (number of
degrees of freedom).  As in Section \ref{sec:def}, we consider both the average and the excess distortion criteria.

Formally, we let the source be a $k$-dimensional vector $S^k \in \mathcal S^k$.  A $(k, E, d, \epsilon)$ energy-limited code is an energy-limited code for $S^k$ with total energy $E$ and probability $ \leq \epsilon$ of distortion exceeding $d$ (see \eqref{eq:def1}). Similarly,  a $(k, E, d)$ energy-limited code is an energy-limited code for $S^k$ with total energy $E$ and average distortion not exceeding $d$ (see \eqref{eq:def2}). In the remainder of this section, we characterize the minimum energy required to transmit $k$ source samples at a given fidelity, i.e. to characterize the following fundamental limits: 
\begin{align}
E^\star_f (k, d) &\triangleq  \left\{ \inf E \colon \exists \text{ a } (k, E, d) \text{ feedback code}\right\},\\
E^\star_f (k, d, \epsilon) &\triangleq  \left\{ \inf E \colon \exists \text{ a } (k, E, d, \epsilon) \text{ feedback code}\right\}
\end{align}
as well as the corresponding limits $E^\star(k, d)$ and $E^\star (k, d, \epsilon)$ of the energy-limited non-feedback codes.

\subsection{Previous results on the energy-per-bit and the energy-distortion tradeoff}
If the source produces equiprobable binary strings of length $k$,  Shannon \cite{shannon1949communication} showed that the minimum energy per information bit to noise power spectral density
ratio compatible with vanishing block error probability converges to
\begin{equation}
\ \frac{E^\star(k, 0, \epsilon)}{k N_0} \to \log_e 2 = - 1.59 \text{ dB} \label{ed:shannon}
\end{equation}
as $k \to \infty$, $\epsilon \to 0$. The fundamental limit in \eqref{ed:shannon} holds regardless of whether feedback is available. Moreover, this fundamental limit is known to be the same regardless of whether the channel is subject to fading or whether the receiver is coherent or not \cite{verdu2002spectral}.   
Polyanskiy et al. refined \eqref{ed:shannon} as \cite[Theorem 3]{polyanskiy2011minimumenergy}
\begin{equation}
E^\star(k, 0, \epsilon) \frac{\log e}{N_0} = k + \sqrt{2 k \log e  }\, Q^{-1}(\epsilon) - \frac 1 2 \log k + \bigo{1} \label{E:2orderchannel}
\end{equation}
for transmission without feedback, and as \cite[Theorem 8]{polyanskiy2011minimumenergy}
\begin{equation}
 E^\star_f(k, 0, \epsilon) \frac{\log e}{N_0} = ( 1 - \epsilon) k + \bigo{ 1 } \label{E:2orderfchannel}
\end{equation}
for transmission with feedback. Moreover,  \cite[Theorem 7]{polyanskiy2011minimumenergy} (see also \eqref{E:pol0}) shows that in fact
\begin{equation}
   E^\star(k, 0, 0) \frac{\log e}{N_0} = k,   
\end{equation}
i.e. in the presence of full noiseless feedback, Shannon's limit \eqref{ed:shannon} can be achieved with equality already at $k = 1$ and $\epsilon = 0$. 

For  the finite blocklength behavior of energy per bit in fading channels, see~\cite{yang2016minimum}.

For the transmission of a memoryless source over the AWGN channel under an average distortion criterion, Jain et al. \cite[Theorem 1]{jain2012energy} pointed out that as $k \to \infty$, 
\begin{equation}
 \ \frac{  E^\star(k, d)}{k} \frac{\log e}{N_0} \to  R(d)  \label{ed:jain}.
\end{equation}
Note that \eqref{ed:jain} still holds even if noiseless feedback is available.

Unlike Polyanskiy et al. \cite{polyanskiy2011minimumenergy}, we allow analog sources and arbitrary distortion criteria, and unlike Jain et al. \cite{jain2012energy}, we are interested in a nonasymptotic analysis of the minimum energy per sample.

\subsection{Energy-limited feedback codes}
\label{sec:edav_fb}
Our first result in this section is a refinement of \eqref{ed:jain}. 

\begin{thm}\label{th:edav} Let the source and its distortion measure satisfy assumptions A\ref{item:first}--A\ref{item:last}. The minimum energy required to transmit $k$ source symbols with average distortion $\leq d$ over an AWGN channel with feedback  satisfies
\begin{equation}
 E^\star_f(k, d) \cdot {\log e\over N_0} = k R(d) + O(\log k) \label{eq:edav}.
\end{equation}
\end{thm}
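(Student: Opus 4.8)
The plan is to mirror the proof of the variable-length average-distortion result, Theorem~\ref{th:A}, replacing the variable-length channel-coding bound $C\ell \le H(W)+O(1)$ by its energy counterpart. That counterpart is exactly Theorem~\ref{thm:ECfeedback}: any discrete $W$ can be sent \emph{error-free} over the AWGN feedback channel with
\[
E\cdot\frac{\log e}{N_0} \le H(W) + O(1).
\]
Thus the two directions split into an achievability part built on a separated source/channel scheme with zero-error transmission, and a converse part built on data processing together with Shannon's energy-constrained mutual-information bound.

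For achievability I would fix a source codebook $c_1,\dots,c_M$, set $W=\argmin_{m}\mathsf d(S^k,c_m)$ and $\hat S^k=c_W$, and transmit $W$ via Theorem~\ref{thm:ECfeedback}. Because transmission is zero-error, $\widehat W=W$ almost surely, so---unlike in Theorem~\ref{th:A}, where H\"older's inequality was needed to absorb the channel-error contribution---the average distortion is \emph{exactly} $\E{\min_{m}\mathsf d(S^k,c_m)}$. Drawing the codewords i.i.d.\ from $P_{\mathsf Z^\star}^k$ and invoking the covering results of Pilc \cite{pilc1967transmission} and Yang--Zhang \cite{yang1999redundancy} (valid under A\ref{item:first}--A\ref{item:last}), one obtains, with $\log M = kR(d)+O(\log k)$, a deterministic codebook whose average distortion is $\le d$. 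Since $H(W)\le \log M$ holds for any codebook, feeding this into Theorem~\ref{thm:ECfeedback} gives $E\,\frac{\log e}{N_0}\le \log M + O(1) = kR(d)+O(\log k)$, establishing the ``$\le$'' direction (the $O(\log k)$ being the covering redundancy plus the $O(1)$ of Theorem~\ref{thm:ECfeedback}).

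For the converse, any $(k,E,d)$ feedback code has $\hat S^k=\mathsf g(Y^\infty)$ with $\E{\mathsf d(S^k,\hat S^k)}\le d$, so by the data-processing inequality and the tensorization $\mathbb R_{S^k}(d)=kR(d)$ for memoryless sources with separable distortion,
\[
kR(d) = \mathbb R_{S^k}(d) \le I(S^k;\hat S^k) \le I(S^k;Y^\infty).
\]
It remains to bound the right-hand side by the energy. Writing $I(S^k;Y_i\mid Y^{i-1}) = I(X_i;Y_i\mid Y^{i-1})$ (because $X_i$ is a deterministic function of $(S^k,Y^{i-1})$ and $Z_i$ is independent of the past) and using the Gaussian maximum-entropy bound together with $\E{X_iZ_i}=0$ gives $I(X_i;Y_i\mid Y^{i-1}) \le \frac12\log\left(1+\frac{\E{X_i^2}}{N_0/2}\right)$; summing over $i$ and applying $\log(1+u)\le u\log e$ yields $I(S^k;Y^\infty)\le \frac{E}{N_0}\log e$. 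Hence $E\,\frac{\log e}{N_0}\ge kR(d)$, which is in fact an exact Shannon-type converse, stronger than the claimed $O(\log k)$ slack.

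The only genuinely delicate point---and the step I would treat most carefully---is verifying that Shannon's energy bound $I(S^k;Y^\infty)\le \frac{E}{N_0}\log e$ survives both the presence of feedback and the unbounded (but finite-energy) number of degrees of freedom. This is exactly why the per-symbol decomposition above is arranged so that the current noise $Z_i$ is independent of the feedback-dependent input $X_i$: feedback creates no correlation between $X_i$ and $Z_i$, so the single-letter Gaussian bound applies verbatim, and the infinite-horizon bound follows by taking $n\to\infty$ in the bound for $I(S^k;Y^n)$. Everything else reduces to the already-established Theorem~\ref{thm:ECfeedback} and standard covering/rate-distortion tensorization under A\ref{item:first}--A\ref{item:last}.
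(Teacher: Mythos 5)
Your proof is correct and follows essentially the same route as the paper's: a separated scheme pairing the Yang--Zhang code ($\log M = kR(d)+O(\log k)$ at average distortion $d$) with zero-error energy-limited feedback transmission for achievability, and data processing plus Shannon's energy bound $I(S^k;Y^\infty)\le \frac{E}{N_0}\log e$ for the converse. The only differences are cosmetic: the paper transmits the binary representation of the index directly via \cite[Theorem 7]{polyanskiy2011minimumenergy} instead of routing through Theorem~\ref{thm:ECfeedback}, and it cites \cite{shannon1949communication} for the converse energy bound where you rederive it, with your per-symbol decomposition making explicit why feedback does not break the bound.
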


\begin{proof}
{\it Achievability}. The expansion in \eqref{eq:edav} is achieved by the following separated source/channel scheme. For the source code, we use the code of Yang and Zhang \cite{yang1999redundancy} (abstract alphabet) that compresses the source down to $M$ representation points with average distortion $d$ such that
\begin{equation}
 \log M = kR(d) + O(\log
k).
\end{equation}
For the channel code, we transmit the binary representation of $M$ error-free using the optimal scheme of Polyanskiy et al. \cite[Theorem 7]{polyanskiy2011minimumenergy}, so that 
\begin{equation}
\log M =  \frac{E}{N_0} \log e.  
\end{equation}
 
 {\it Converse}. By data processing, similarly to \eqref{E:Cf}, 
\begin{equation}
k R(d) \leq \frac {E}{N_0} \log e. 
\end{equation}
\end{proof}

\begin{remark}
 For the transmission of a Gaussian source over the feedback AWGN channel, we have
 \begin{equation}
 E^\star_f(k, d) \cdot {\log e\over N_0} = k R(d) \label{eq:edavexact}.
\end{equation}
Indeed, the Schalkwijk-Bluestein   scheme \cite{schalkwijk1967transmission,cruise1967achievement} attains \eqref{eq:edavexact} for $k = 1$. For $k > 1$, transmitting the Schalkwijk-Bluestein codewords corresponding to $i$-th source sample 
in time slots $i, k + i, 2 k + i, \ldots$ attains \eqref{eq:edavexact} exactly for all $k = 1, 2, \ldots$. 
\end{remark}

\begin{thm}
In the transmission of a source satisfying the assumptions A\ref{item:first}--A\ref{item:last} over an AWGN channel with feedback, the minimum average energy required for the transmission of $k$ source samples under the requirement that the probability of exceeding distortion $d$ is no greater than $0 \leq \epsilon < 1$ satisfies, as $k \to \infty$, 
\begin{align}
 E^\star_f \left( k, d, \epsilon \right) \frac{\log e}{N_0}   &= (1 - \epsilon) k R(d) - \sqrt{\frac{k \mathcal V(d)}{2 \pi} } \mathrm{e}^{- \frac { (\Qinv{\epsilon})^2} 2 }  \notag\\
  &+ \bigo{\log k}. \label{E:edex}
\end{align}
\label{th:edex}
\end{thm}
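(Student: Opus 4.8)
The plan is to obtain \eqref{E:edex} by transporting the proof of Theorem~\ref{th:B} from the variable-length channel-coding setting to the energy-limited setting, exploiting the fact that Theorem~\ref{thm:ECfeedback} together with its converse \eqref{E:Cf} establishes the exact energy analog of the zero-error channel-coding identity of Theorem~\ref{th:X3}: error-free transmission of a discrete random variable $W$ over the feedback AWGN channel requires
\[
  E\,\frac{\log e}{N_0} = H(W) + \bigo{1}.
\]
Thus $E\,\frac{\log e}{N_0}$ plays precisely the role of $C\ell$, and the variable energy expended by the Huffman-based scheme of Theorem~\ref{thm:ECfeedback} plays the role of the variable transmission length; this is what produces the $(1-\epsilon)$ factor and the negative second-order term, exactly as in Theorem~\ref{th:B}. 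Specializing to $\epsilon=0$ will simultaneously cover the guaranteed-distortion case.

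\emph{Achievability.} First I would fix a lossy compressor $S^k\to W$ that meets the excess-distortion constraint $\Prob{\mathsf d(S^k,\widehat S^k)>d}\le\epsilon$ while achieving $H(W)=H_{d,\epsilon}(S^k)+\bigo{\log k}$; the existence of such a compressor within $\bigo{\log k}$ of the $(d,\epsilon)$-entropy is guaranteed by the same random-coding argument invoked in the footnote of Theorem~\ref{th:B}. I would then transmit $W$ error-free over the feedback AWGN channel by the scheme of Theorem~\ref{thm:ECfeedback}, whose average energy obeys $E\,\frac{\log e}{N_0}\le H(W)+\bigo{1}$ and whose constraint is met on average over the codebook, as required by Definition~\ref{E:deff}. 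Because the decoder recovers $W$ without error, the overall excess-distortion probability equals that of the compressor, namely $\epsilon$. Substituting \eqref{minI2orderlossy} for $H_{d,\epsilon}(S^k)$ then yields the `$\le$' direction of \eqref{E:edex}.

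\emph{Converse.} For any $(k,E,d,\epsilon)$ feedback code the reconstruction satisfies $\Prob{\mathsf d(S^k,\widehat S^k)>d}\le\epsilon$, so $P_{\widehat S^k\mid S^k}$ is feasible in the minimization defining $\mathbb R_{S^k}(d,\epsilon)$ and hence $\mathbb R_{S^k}(d,\epsilon)\le I(S^k;\widehat S^k)$. Since $\widehat S^k=\mathsf g(Y^\infty)$, data processing gives $I(S^k;\widehat S^k)\le I(S^k;Y^\infty)=\sum_j I(S^k;Y_j\mid Y^{j-1})$. I would then use that $X_j=\mathsf f_j(S^k,Y^{j-1})$ is deterministic given $(S^k,Y^{j-1})$ and that the channel is memoryless to bound each term as $I(S^k;Y_j\mid Y^{j-1})\le I(X_j;Y_j)\le \frac{\E{X_j^2}}{N_0}\log e$, the last step being the AWGN single-letter estimate $\tfrac12\log(1+x)\le\tfrac{x}{2}\log e$. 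Summing over $j$ and invoking the energy budget \eqref{E:Eav} gives $\mathbb R_{S^k}(d,\epsilon)\le \frac{E}{N_0}\log e$, and applying \eqref{minI2orderlossy} to $\mathbb R_{S^k}(d,\epsilon)$ completes the `$\ge$' direction.

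The step I expect to be the main obstacle is the feedback part of the converse: one must confirm that feedback cannot let the code beat the energy-capacity bound $I(S^k;Y^\infty)\le\frac{E}{N_0}\log e$. This is exactly where the chain-rule decomposition above is essential — the deterministic dependence $X_j=\mathsf f_j(S^k,Y^{j-1})$ together with memorylessness reduces each conditional mutual information to the single-letter quantity $I(X_j;Y_j)$, after which the concavity bound and the per-codeword energy constraint close the argument. Everything else is routine once \eqref{minI2orderlossy} and Theorem~\ref{thm:ECfeedback} are in hand.
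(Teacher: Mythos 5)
Your proposal is correct and follows essentially the same route as the paper: achievability by pairing a compressor achieving $H_{d,\epsilon}(S^k)$ with the zero-error Huffman-based feedback scheme of Theorem~\ref{thm:ECfeedback} and invoking \eqref{minI2orderlossy}, and a converse via data processing as in \eqref{E:Cf} combined with \eqref{minI2orderlossy}. The only difference is cosmetic: where the paper cites Shannon for the bound $I(S^k;Y^\infty)\le \frac{E}{N_0}\log e$ under feedback, you spell out the chain-rule/single-letterization argument explicitly, which is a valid (and welcome) filling-in of that cited step.
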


\begin{proof}
{\it Achievability.} Pair a lossy compressor $S^k \to W$ with excess-distortion probability $\epsilon$ and $H(W) = H_{d, \epsilon}(S^k)$ with the achievability scheme in Theorem \ref{thm:ECfeedback} and apply \eqref{E:Af} and \eqref{minI2orderlossy}. 

{\it Converse.} Again, the converse result follows proceeding as in \eqref{E:Cf}, invoking \eqref{minI2orderlossy}. 
\end{proof}

Comparing \eqref{E:edex} and \eqref{f2v:2order}, we observe that, similar to the setup in Section \ref{sec:main}, allowing feedback and average power constraint reduces the asymptotically achievable minimum energy per sample by a factor of $1 - \epsilon$. As in Section \ref{sec:main}, that limit is approached from below rather than from above, i.e. shorter blocklenghts are more economical in terms of the energy
per transmitted symbol.

Similar to the setup of Section \ref{sec:main}, naive separation achieves at most
\begin{equation}
 E_f^\star(k, d, \epsilon) \frac {\log e}{N_0} \leq (1 - \epsilon) k R(d) + a \sqrt{k \log k}, \, a > 0.
\end{equation}

\subsection{Energy-limited non-feedback codes}
\label{sec:edav_nonfb}

Our next result generalizes \cite[Theorem 3]{polyanskiy2011minimumenergy}. Loosely speaking, it shows that the
energy $E$, probability of error $\epsilon$ and distortion $d$ of the best non-feedback code satisfy
\begin{equation}
  {E\over N_0} \log e - k R(d) \approx \sqrt{k\mathcal{V}(d) + {2E\over N_0}\log e} \cdot Q^{-1}(\epsilon). \label{eq:116a}
\end{equation}
Note that in \eqref{eq:116a} source and channel dispersions add up, as in the usual (non-feedback) joint source-channel coding
problem~\cite{wang2011dispersion,kostina2012jscc}.
More precisely, we have the following:
\begin{thm}
In the transmission of $k$ samples of a stationary memoryless source (satisfying the assumptions A\ref{item:first}--A\ref{item:last}) over the AWGN channel, the minimum energy necessary for achieving probability $0< \epsilon<1$ of exceeding distortion $d$ satisfies, as $k \to \infty$, 
\begin{align}
&~  E^\star \left( k, d, \epsilon \right) \frac{\log e}{N_0}  \label{E:2order}\\
=&~ k R(d)  + \sqrt{ k \left( 2 R(d) \log e + \mathcal V(d)\right) } \Qinv{\epsilon} + \bigo{ \log k }. \notag
\end{align} 
\label{thm:E2order}
\end{thm}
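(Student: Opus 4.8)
The plan is to prove~\eqref{E:2order} via matching achievability and converse bounds, exploiting the fact that this is a joint source--channel coding (JSCC) dispersion statement of exactly the type in~\cite{kostina2012jscc,wang2011dispersion}, but with the number of channel uses replaced by the total energy. The guiding intuition, already stated before the theorem, is that the source and channel fluctuations are asymptotically \emph{independent} Gaussians, so their variances add. The $\mathsf d$-tilted information $\jmath_{S^k}(S^k,d)=\sum_{i=1}^k \jmath_{\mathsf S}(S_i,d)$ is, by the CLT, approximately $\mathcal N\!\left(kR(d),\,k\mathcal V(d)\right)$, while transmitting $N$ nats reliably over the energy-constrained orthogonal-signaling AWGN channel costs, by the reading of~\eqref{E:2orderchannel} as a function of the message count, an energy $E\tfrac{\log e}{N_0}=N+\sqrt{2N\log e}\,\Qinv{\epsilon}+\smallo{\sqrt N}$, i.e. a per-nat channel ``energy-dispersion'' of $2\log e$. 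Substituting $N\approx\jmath_{S^k}(S^k,d)$ and freezing $N\approx kR(d)$ inside the square root produces the combined variance $k\mathcal V(d)+2kR(d)\log e$.

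For achievability I would instantiate Theorem~\ref{E:thm:A} with a good lossy compressor. First compress $S^k$ into $M=\exp\{kR(d)+\bigo{\log k}\}$ representation points, ordering the induced index $W\in\{1,\dots,M\}$ so that $\log W$ tracks $\jmath_{S^k}(S^k,d)$; existence of such a near-optimal code under A\ref{item:first}--A\ref{item:last} follows as in~\cite{kostina2014varrate,kostina2011fixed}. Theorem~\ref{E:thm:A} then bounds the excess-distortion probability by $\E{\varepsilon(E_1,W)}+\varepsilon\!\left(E_2,\lfloor\log_2 M\rfloor+1\right)$. The second term is made $\smallo{1/\sqrt k}$ by spending $E_2=\bigo{\log k}\cdot\tfrac{N_0}{\log e}$ to convey the length (only $\bigo{k}$ possible values), which is absorbed into the $\bigo{\log k}$ remainder. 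For the first term I would use the sharp asymptotics $\varepsilon(E_1,m)\approx Q\!\left(\frac{E_1\log e/N_0-\log m}{\sqrt{2\log m\,\log e}}\right)$. Substituting the random $\log W\approx\mathcal N(kR(d),k\mathcal V(d))$ and freezing $\log W\approx kR(d)$ in the denominator turns $\E{\varepsilon(E_1,W)}$ into $\Prob{G+\sqrt{2kR(d)\log e}\,Z> E_1\tfrac{\log e}{N_0}}$ for independent $G\sim\mathcal N(kR(d),k\mathcal V(d))$ and $Z\sim\mathcal N(0,1)$; requiring this to be $\le\epsilon$ and inverting yields $E_1\tfrac{\log e}{N_0}=kR(d)+\sqrt{k(2R(d)\log e+\mathcal V(d))}\,\Qinv{\epsilon}+\bigo{\log k}$, hence the ``$\le$'' direction.

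For the converse I would invoke the JSCC meta-converse of~\cite{kostina2012jscc}, specialized to the energy-constrained AWGN channel by replacing the blocklength-$n$ channel term with the minimum-energy channel converse of~\cite[Theorem 3]{polyanskiy2011minimumenergy}. Concretely, a change of measure to the output distribution induced by capacity-achieving (peaky) inputs lower-bounds the excess-distortion probability of any $(k,E,d,\epsilon)$ code by a quantity of the form $\Prob{\jmath_{S^k}(S^k,d)+ D > E\tfrac{\log e}{N_0}}-\smallo{1}$, where $D$ is the channel log-likelihood deficit whose distribution carries variance $2kR(d)\log e$ and is asymptotically independent of the source. A Berry--Esseen argument, using A\ref{item:last} to control the third moment of $\jmath_{\mathsf S}(\mathsf S,d)$ (finiteness of the twelfth distortion moment is far more than enough), then identifies the second-order term and gives the matching ``$\ge$'' direction up to $\bigo{\log k}$.

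The main obstacle is the converse, specifically establishing that the source and channel contributions combine as an \emph{independent} sum of variances to the precision demanded by the $\Qinv{\epsilon}$ coefficient: one must verify that the coupling between $\jmath_{S^k}(S^k,d)$ and the channel log-likelihood ratio contributes only at the $\bigo{\log k}$ level, and that the optimal input for the energy-constrained AWGN channel really produces the per-nat dispersion $2\log e$ uniformly over the relevant range $m=\exp\{kR(d)+\bigo{\sqrt k}\}$. Handling the abstract (possibly analog) source alphabet in both directions---via the regularity of $\jmath_S(\cdot,d)$ guaranteed by A\ref{item:first}--A\ref{item:last} and the results of~\cite{kostina2011fixed}---is the remaining technical point.
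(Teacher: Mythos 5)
Your overall architecture (achievability via the variable-length separated scheme of Theorem~\ref{E:thm:A} plus a lossy compressor; converse via the meta-converse of \cite[Theorem 1]{kostina2012jscc} plus Berry--Esseen) is the same as the paper's, but your achievability contains a genuine error: the codebook size $M=\exp\{kR(d)+\bigo{\log k}\}$ is too small. The $\mathsf d$-tilted information $\jmath_{S^k}(S^k,d)$ fluctuates around $kR(d)$ with standard deviation $\sqrt{k\mathcal V(d)}$, so with probability tending to $1/2$ it exceeds $\log M$; consequently no code of that size can cover the source within distortion $d$ except with probability $\ge 1/2-\smallo{1}$ (this is just the source-coding converse, cf.\ \eqref{f2v:2ordersource} and \cite{kostina2011fixed}), and in particular ``$\log W$ tracks $\jmath_{S^k}(S^k,d)$'' is impossible for the upper half of source realizations. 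Your error budget never accounts for this: Theorem~\ref{E:thm:A} bounds the probability of misdecoding $W$, not the excess-distortion probability, and the missing third term---the probability that the compressor finds no $d$-close representation point, which the paper makes explicit in \eqref{E:Alossy} as $\E{\left(1-P_{\mathsf Z^\star}^k(B_d(S^k))\right)^M}$---is of order $1/2$ under your choice of $M$, destroying the bound for any $\epsilon<1/2$. The fix is exactly what the paper does: take $M=\exp(2kR(d))$, which costs nothing because the energy spent depends on the transmitted index $W$ (through $\E{\varepsilon(E_1,W)}$) and not on $M$, making the coverage-failure term exponentially small; then justify the ``tracking'' claim rigorously by noting that the first-match index $W$ is geometric given $S^k$ with mean $1/P_{\mathsf Z^\star}^k(B_d(S^k))$, applying Jensen's inequality via concavity of $\varepsilon(E,\cdot)$, and invoking \cite[Lemma 2]{kostina2011fixed} to replace $\log\frac{1}{P_{\mathsf Z^\star}^k(B_d(S^k))}$ by $\jmath_{S^k}(S^k,d)+\bigo{\log k}$.

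Your converse sketch is essentially the paper's proof, and the issue you flag as the ``main obstacle''---establishing asymptotic independence of the source and channel contributions---is in fact a non-issue. In \eqref{eq:C11} the auxiliary output law is the pure noise distribution $\prod_{k=1}^\infty\mathcal N(0,N_0/2)$ (not an output law induced by peaky inputs, as you suggest); with this choice, for any input $\mathbf x$ with $\|\mathbf x\|^2\le E$ the log-likelihood ratio $\imath_{\mathbf X;\bar{\mathbf Y}}(\mathbf x;\mathbf Y)$ is \emph{exactly} Gaussian, $\mathcal N\left(\frac{\|\mathbf x\|^2}{N_0}\log e,\frac{2\|\mathbf x\|^2}{N_0}\log e\right)$, and \emph{exactly} independent of $S^k$ conditionally on the input, so the variances add with no coupling argument needed; the only remaining work is the inner infimum over $\|\mathbf x\|^2\le E$, the choice $\gamma=\frac12\log\frac{E}{N_0}$, and one Berry--Esseen application to the i.i.d.\ sum $\sum_{i=1}^k\jmath_{\mathsf S}(S_i,d)$.
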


\begin{proof}

{\it Achievability:} 
We let the total energy $E$ be such that 
\begin{align}
 E\, \frac{\log e}{N_0}  
&= k R(d) + \sqrt{ k \left( 2 R(d) \log e + \mathcal V(d)\right) } \Qinv{\epsilon - \frac a {\sqrt
 k}} \notag\\
 &+ b \log k, \label{E:2order_ach}
\end{align}
and we show that $a > 0$ and $b$ can be chosen so that the excess distortion probability is bounded by $\epsilon$. 

We consider a good lossy code with $M = \exp(2 k R(d))$ representation points, so that the probability that the source
is not represented within distortion $d$ is exponentially small. We show that a combination of that code with the
variable-length separated scheme in Theorem \ref{E:thm:A} achieves \eqref{E:2order_ach}. 
First, we prove the following generalization of Theorem  \ref{E:thm:A} to the lossy case: for any $M$, there exists an $(k, E, d, \epsilon^\prime)$ code for the AWGN channel (without feedback)  such that
\begin{align}
\epsilon^\prime &\leq   \E{\varepsilon \left(E_1, \frac 1 {P_{\mathsf Z^\star}^k (B_d(S^k))} \right) } + \varepsilon(E_2, \lfloor \log M \rfloor + 1) \notag\\
&+ \E{\left( 1 - P_{\mathsf Z^\star}^k (B_d(S^k)))\right)^M }\label{E:Alossy},
\end{align}
where
\begin{equation}
 B_d(s^k) \triangleq \{z^k \in \hat {\mathcal S}^k \colon \mathsf d(s^k, z^k) \leq d\}.
\end{equation}
Towards that end, let the representation points $Z_1, Z_2, \ldots, Z_M$ be drawn i.i.d. from $P_{\mathsf Z^\star}^k$. The source encoder goes down the list of the representation points and outputs the index of the first $d$-close match to $S^k$:
\begin{equation}
 W \triangleq \min \{ i \colon \mathsf d(S^k, Z_i) \leq d \}
\end{equation}
(if there is no such index, it outputs 1). Averaged over $Z_1, \ldots, Z_M$, the probability that no $d$-close match is found is upper bounded by the third term in \eqref{E:Alossy} (e.g. \cite[Theorem 10]{kostina2011fixed}). The index $W$  
is then transmitted over the channel using the scheme in Theorem  \ref{E:thm:A}, with the total probability of error averaged over all lossy codebooks given by
\begin{align}
 \epsilon^\prime &\leq   \E{\varepsilon \left(E_1, W \right) } + \varepsilon(E_2, \lfloor \log M \rfloor + 1) \notag\\
 &+ \E{\left( 1 - P_{\mathsf Z^\star}^k (B_d(S^k)))\right)^M }. 
\end{align}
Since conditioned on $S = s$, $W$ is geometrically distributed with success probability $P_{Z^\star}^k(B_d(s^k))$, we have 
\begin{equation}
 \E{W | S^k = s^k} = \frac 1 {P_{\mathsf Z^\star}^k(B_d(s^k))}.
\end{equation}
Noting that $\varepsilon(E, m)$ is a concave function of $m$, we have by Jensen's inequality
\begin{equation}
 \E{\varepsilon \left(E,W \right)} \leq \E{\varepsilon \left(E, \E{W|S^k} \right)\} },
\end{equation}
which gives the first term in \eqref{E:A}. 

We proceed to show that with the choice of
\begin{equation}
E_1 =  E - c\log E, \label{E:Echoice}
\end{equation}
for an appropriate $c > 0$, and $M = \exp(2 k R(d))$, the right side of \eqref{E:Alossy} is upper bounded by $\epsilon$. 

A reasoning similar to \cite[(108)--(111)]{kostina2011fixed} and the Cram\'er-Chernoff bound yield
\begin{align}
 \E{\left( 1 - P_{Z^{k \star}}(B_d(S^k))\right)^M } \leq \exp(- k a_1) \label{E:scerror}
\end{align}
for some $a_1 > 0$. On the other hand, \eqref{E:2orderchannel} \cite[Theorem 3]{polyanskiy2011minimumenergy} implies  
\begin{equation}
\varepsilon \left( E, m \right)  = \Prob{ \log m  >  G(E) }+ \bigo{\frac 1 {\sqrt k}}  \label{E:cherror},
\end{equation}
where 
$G(E)  \stackrel{D}{=} \mathcal N \left( \frac {E }  {N_0} \log e - \frac 1 2 \log \frac E {N_0}, \frac {2 E }{N_0} \log^2 e \right)$.

Applying \eqref{E:scerror} and  \eqref{E:cherror} to \eqref{E:A}, we conclude that the excess-distortion probability is bounded above by
\begin{align}
&~\Prob{ \log \frac 1 {P_{Z^{k \star}}(B_d(S^k))}   >  G(E - c \log E) }  \notag\\
+&~ \Prob{\log \left( \log M + 1 \right)  >  G\left( c \log E\right) } +  \bigo{\frac 1 {\sqrt E}} \label{E:2error}.
\end{align}
The second term in \eqref{E:2error} can be made to decay as fast as $\bigo{\frac 1 {\sqrt E}}$ for large enough $c$. To evaluate the first term in \eqref{E:2error}, we recall \cite[Lemma 2]{kostina2011fixed}, which states that for $k$ large enough,
\begin{align}
 &~ \Prob{ \log \frac 1 {P_{Z^{k\star}}(B_d(S^k))} \leq \sum_{i = 1}^k \jmath_{\mathsf S}(S_i, d) + C \log k + c_0 } \notag\\
 \geq& ~ 1 - \frac K {\sqrt k} \label{E:kvlemma},
\end{align}
where $c_0$ and $K$ are constants, and $C$ is a constant explicitly identified in \cite[Lemma 2]{kostina2011fixed}. Letting $b = c + C - \frac 1 2$ and applying \eqref{E:kvlemma} to upper-bound the first term in  \eqref{E:2error}, we conclude by the Berry-Ess\'een Theorem that
\begin{align}
&~ \Prob{ \log \frac 1 {P_{Z^{k \star}}(B_d(S^k))}    >  G\left(E - c \log E\right) } \notag\\
\leq&~ \epsilon - \frac a {\sqrt k} + \bigo{\frac 1 {\sqrt k}}.
\end{align}
Since $a$ can be chosen so that \eqref{E:2error} is upper bounded by $\epsilon$ for $k$ large enough, this concludes the proof of the achievability part.

{\it Converse:} 
The result in \cite[Theorem 1]{kostina2012jscc} states that the excess distortion probability is bounded as in 
\setcounter{mytempeqncnt}{\value{equation}}
\begin{table*}[!b]
\normalsize
\setcounter{equation}{\value{mytempeqncnt}}
\vspace*{4pt}
\hrulefill
\begin{align}
 \epsilon
\geq &~\sup_{\gamma > 0}\bigg\{ \sup_{P_{\bar {\mathbf Y}}} \E{\inf_{\mathbf x \colon \| \mathbf x\|^2 \leq E}  \Prob{ \jmath_{S^k}(S^k, d) -  \imath_{\mathbf X; \bar{\mathbf Y}}(\mathbf x; \mathbf Y) \geq \gamma \mid  S} } 
 - \exp\left( -\gamma \right) \bigg\} \label{eq:C11}
\end{align}
\end{table*}
\eqref{eq:C11} 
in the bottom of the page, 
where $\jmath_{S^k}(s^k, d)$ is the $\mathsf d$-tilted information in $s^k \in \mathcal S^k$ defined in \eqref{jE}, 
$\imath_{\mathbf X; \bar{\mathbf Y}} (\mathbf x; \mathbf y) \triangleq \log \frac{dP_{\mathbf Y | \mathbf X = \mathbf x}}{dP_{\bar {\mathbf Y}}}(\mathbf y)$, and 
$P_{\mathbf Y | \mathbf X = \mathbf x}$ and $P_{\bar {\mathbf Y}}$ are specialized to
\begin{align}
P_{\bar {\mathbf Y}} &\stackrel{D}{=} \prod_{k = 1}^\infty \mathcal N \left( 0, \frac {N_0}{2}\right) \\
P_{\mathbf Y | \mathbf X = \mathbf x} &\stackrel{D}{=} \prod_{k = 1}^\infty \mathcal N \left(x_k, \frac {N_0}{2} \right)\\
\imath_{\mathbf X; \bar{\mathbf Y}} (\mathbf x; \mathbf Y)  &\stackrel{D}{=} 
\mathcal N \left( \frac{\|\mathbf x\|^2}{N_0} \log e, \frac{2\|\mathbf x\|^2}{N_0} \log e\right) 
\end{align}

Next, we let in \eqref{eq:C11} $\gamma = \frac 1 2 \log \frac {E}{N_0}$. Since $\jmath_{S^k}(S^k, d) = \sum_{i = 1}^k \jmath_{\mathsf S}(S_i, d)$ is a sum of independent random variables, the Berry-Ess\'een bound applies to the probability in \eqref{eq:C11}, and the converse direction of \eqref{E:2order} follows since $\| \mathbf x\|^2 \leq E$.
\apxonly{
\textbf{TODO:} Vic, is it possible to give a short proof of~\eqref{eq:C11} with just $\beta_\alpha$ and the choice of
$Q$-measure? At least in the apxonly version? VK: Yur I am not sure what you mean. The proof of \eqref{eq:C11} in \cite{kostina2012jscc} is already short?
}
\end{proof}

If the maximal power constraint in \eqref{E:Emax} is relaxed to \eqref{E:Eav},  then $E^\star_a \left( k, d, \epsilon \right)$, the minimum average power required for transmitting $k$ source samples over an AWGN channel with the probability of exceeding distortion $d$ smaller than or equal to $0 < \epsilon < 1$ satisfies, under assumptions A\ref{item:first}--A\ref{item:last}: 
\begin{equation}
 E^\star_a \left( k, d, \epsilon \right) \frac{\log e}{N_0}   = R(d) (1 - \epsilon) k  + \bigo{\sqrt{k \log k}}, \label{E:2ordera}
\end{equation}
i.e. the asymptotically achievable minimum energy per sample is reduced by a factor of $1 - \epsilon$ if a maximal power constraint is relaxed to an average one. This parallels the result in \eqref{f2v:avnof}, which shows that variable-length coding over a channel reduces the asymptotic fundamental limit by a factor of $1 - \epsilon$ compared to fixed-length coding, even without feedback.
\begin{proof}[Proof of \eqref{E:2ordera}]
Observe that Theorem \ref{th:edex} ensures that a smaller average energy than that in \eqref{E:2ordera} is not attainable even with full noiseless feedback. 
In the achievability direction, let $\left( \mathsf f^\star, \mathsf g^\star \right)$ be the optimal variable-length source code achieving the probability of exceeding $d$ equal to $\epsilon^\prime$ (see \cite[Section III.B]{kostina2015varrate}). Denote by $\ell(\mathsf f^\star(s))$ the length of $\mathsf f^\star(s)$. Let $M$ be the size of that code. Set the energy to transmit the codeword of length $\ell(\mathsf f^\star(S^k))$ to
\begin{equation}
 \ell(\mathsf f^\star(S^k)) N_0 \log_e 2 + \sqrt{k \log k} \label{E:etcl}.
\end{equation}
As shown in \cite{kostina2015varrate}, $\E{\ell(\mathsf f^\star(S))}$ is equal to the right side of \eqref{f2v:2order} (with $\epsilon$ replaced by $\epsilon^\prime$). Choosing $\epsilon^\prime = \epsilon - \frac a {\sqrt k}$ for some $a$, we conclude that indeed the average energy satisfies \eqref{E:2ordera}. Moreover, \eqref{E:cherror} implies that the expression inside the expectation in \eqref{E:Aa} is $\bigo{\frac 1 {\sqrt k}}$. It follows that for a large enough $a$, the excess distortion probability is bounded by $\epsilon$. 
\end{proof}

%
%
%
%
%
%

\apxonly{
TODO:
Average energy - average distortion tradeoff: 
\begin{enumerate}
\item Observe: for sources with separable distortion measure
$d^\star(k + 1, E) \leq d^\star(k, E)$ (time sharing argument, average distortion). 
\item Algorithm to send 1 Gaussian sample at the optimal tradeoff.
\end{enumerate}
}

\section{Conclusions}

We have considered several scenarios for joint source-channel coding with and without feedback. Our main conclusions are: 

\begin{enumerate}

\item The average delay vs. distortion tradeoff with feedback, as well as the average energy vs. distortion tradeoff with feedback, is governed by the channel capacity, and the source rate-distortion and rate-dispersion functions. In particular, the channel dispersion plays no role.

\item In variable-length coding with feedback, the asymptotically achievable minimum average length is reduced by a factor of $1 - \epsilon$, where $\epsilon$ is the excess distortion probability. This asymptotic fundamental limit is approached from below, i.e., counter-intuitively, smaller source blocklengths lead to smaller attainable average delays.

\item Introducing a termination symbol that is always decoded error-free allows for transmission over noisy channels with guaranteed distortion. 

\item Variable-length transmission without feedback still improves the asymptotic fundamental limit by a factor of $1 - \epsilon$, where $\epsilon$ is the excess distortion probability.

\item In all the cases we have analyzed the approach to the fundamental limits is very fast: $\bigo{\frac {\log k}{k}}$, where $k$ is the source blocklength.  This behavior is attained, under average distortion,  by a separated scheme with stop-feedback.

\item The setting of a wideband Gaussian channel with an energy constraint exhibits many interesting parallels with the variable-length coding setting. Allowing a non-vanishing excess distortion probability $\epsilon$ and an average (rather than maximal) energy constraint reduces the asymptotically achievable minimum energy by a factor of $1 - \epsilon$. In the presence of feedback, just as in the variable-length coding, this asymptotic fundamental limit is approached from below.

\item  Error-free transmission of a random variable $W$ over the AWGN channel with ideal feedback, with almost optimal energy consumption, is possible via a variable-length separated scheme. 

\item More generally, variable-length separated schemes perform remarkably well in all considered scenarios. 
\end{enumerate}

\appendices

\bibliographystyle{IEEEtran}
\bibliography{IEEEabrv,rateDistortion,reports}

\begin{thebibliography}{10}
\providecommand{\url}[1]{#1}
\csname url@samestyle\endcsname
\providecommand{\newblock}{\relax}
\providecommand{\bibinfo}[2]{#2}
\providecommand{\BIBentrySTDinterwordspacing}{\spaceskip=0pt\relax}
\providecommand{\BIBentryALTinterwordstretchfactor}{4}
\providecommand{\BIBentryALTinterwordspacing}{\spaceskip=\fontdimen2\font plus
\BIBentryALTinterwordstretchfactor\fontdimen3\font minus
  \fontdimen4\font\relax}
\providecommand{\BIBforeignlanguage}[2]{{%
\expandafter\ifx\csname l@#1\endcsname\relax
\typeout{** WARNING: IEEEtran.bst: No hyphenation pattern has been}%
\typeout{** loaded for the language `#1'. Using the pattern for}%
\typeout{** the default language instead.}%
\else
\language=\csname l@#1\endcsname
\fi
#2}}
\providecommand{\BIBdecl}{\relax}
\BIBdecl

\bibitem{shannon1956zeroerror}
C.~E. Shannon, ``The zero error capacity of a noisy channel,'' \emph{IRE
  Transactions on Information Theory}, vol.~2, no.~3, pp. 8--19, Sep. 1956.

\bibitem{burnashev1976data}
M.~V. Burnashev, ``Data transmission over a discrete channel with feedback:
  {R}andom transmission time,'' \emph{Problems of Information Transmission},
  vol.~12, no.~4, pp. 10--30, 1976.

\bibitem{polyanskiy2011feedback}
Y.~Polyanskiy, H.~V. Poor, and S.~Verd\'{u}, ``Feedback in the non-asymptotic
  regime,'' \emph{IEEE Transactions on Information Theory}, vol.~57, no.~8, pp.
  4903--4925, Aug. 2011.

\bibitem{fong2015AWGNfeedback}
S.~L. Fong and V.~Y.~F. Tan, ``Asymptotic expansions for the {AWGN} channel
  with feedback under a peak power constraint,'' in \emph{Proceedings 2015 IEEE
  International Symposium on Information Theory}, Hong Kong, June 2015, pp.
  311--315.

\bibitem{kostina2015varrate}
V.~Kostina, Y.~Polyanskiy, and S.~Verd{\'u}, ``Variable-length compression
  allowing errors,'' \emph{IEEE Transactions on Information Theory}, vol.~61,
  no.~9, pp. 4316--4330, Aug. 2015.

\bibitem{kostina2012jscc}
V.~Kostina and S.~Verd\'{u}, ``Lossy joint source-channel coding in the finite
  blocklength regime,'' \emph{IEEE Transactions on Information Theory},
  vol.~59, no.~5, pp. 2545--2575, May 2013.

\bibitem{campo2011random}
A.~Tauste~Campo, G.~Vazquez-Vilar, A.~Guill{\'e}n~i F{\`a}bregas, and
  A.~Martinez, ``Random-coding joint source-channel bounds,'' in
  \emph{Proceedings 2011 IEEE International Symposium on Information Theory},
  Saint-Petersburg, Russia, Aug. 2011, pp. 899--902.

\bibitem{wang2011dispersion}
D.~Wang, A.~Ingber, and Y.~Kochman, ``The dispersion of joint source-channel
  coding,'' in \emph{Proceedings 49th Annual Allerton Conference on
  Communication, Control and Computing}, Monticello, IL, Sep. 2011.

\bibitem{javidi2013dynamicJSCC}
T.~Javidi and A.~Goldsmith, ``Dynamic joint source-channel coding with
  feedback,'' in \emph{IEEE International Symposium on Information Theory},
  Istanbul, Turkey, July 2013, pp. 16--20.

\bibitem{caire2004almost}
G.~Caire, S.~Shamai, and S.~Verd\'{u}, ``Almost-noiseless joint source-channel
  coding-decoding of sources with memory,'' in \emph{Proc. Fifth International
  ITG Conference on Source and Channel Coding (SCC)}, Jan. 2004, pp. 295--304.

\bibitem{ovseevich1968capacity}
I.~A. Ovseevich, ``Capacity of a randomized channel with feedback and matching
  of the source,'' \emph{Problemy Peredachi Informatsii}, vol.~4, no.~1, pp.
  52--59, 1968.

\bibitem{fang1970optimum}
R.~Fang, ``Optimum linear feedback scheme for matching colored {G}aussian
  sources with white {G}aussian channels,'' \emph{IEEE Transactions on
  Information Theory}, vol.~16, no.~6, pp. 789--791, 1970.

\bibitem{chande1998joint}
V.~Chande, H.~Jafarkhani, and N.~Farvardin, ``Joint source-channel coding of
  images for channels with feedback,'' in \emph{Proceedings 1998 IEEE
  Information Theory Workshop}, San Diego, CA, Feb. 1998.

\bibitem{kafedziski1998joint}
V.~Kafedziski, ``Joint source channel coding of images over frequency selective
  fading channels with feedback using dct and multicarrier block pulse
  amplitude modulation,'' in \emph{Proceedings Asilomar Conf. Signals, Systems
  and Computers}, vol.~1, Oct. 1998, pp. 37--41.

\bibitem{lu1999progressive}
J.~Lu, A.~Nosratinia, and B.~Aazhang, ``Progressive joint source-channel coding
  in feedback channels,'' in \emph{Proceedings 1999 Data Compression
  Conference}, Mar. 1999, pp. 140--148.

\bibitem{gastpar2003source}
M.~Gastpar and B.~Rimoldi, ``Source-channel communication with feedback,'' in
  \emph{Proceedings 2003 IEEE Information Theory Workshop}, Paris, France, Apr.
  2003, pp. 279--282.

\bibitem{manakkal2005source}
R.~Manakkal and B.~Rimoldi, ``A source-channel coding scheme for discrete
  memoryless channels with feedback,'' in \emph{Proceedings 2005 IEEE
  International Symposium on Information Theory}, Adelaide, Australia, Sep.
  2005, pp. 1511--1515.

\bibitem{horstein1963sequential}
M.~Horstein, ``Sequential transmission using noiseless feedback,'' \emph{IEEE
  Transactions on Information Theory}, vol.~9, no.~3, pp. 136--143, July 1963.

\bibitem{ooi1998fast}
J.~M. Ooi and G.~W. Wornell, ``Fast iterative coding techniques for feedback
  channels,'' \emph{IEEE Transactions on Information Theory}, vol.~44, no.~7,
  pp. 2960--2976, Nov. 1998.

\bibitem{ahlswede1971constructive}
R.~Ahlswede, ``A constructive proof of the coding theorem for discrete
  memoryless channels with feedback,'' in \emph{Transactions of the Sixth
  Prague Conference on Information Theory, Statistical Decision Functions,
  Random Processes}, Prague, Sep. 1971.

\bibitem{shayevitz2011optimal}
O.~Shayevitz and M.~Feder, ``Optimal feedback communication via posterior
  matching,'' \emph{IEEE Transactions on Information Theory}, vol.~57, no.~3,
  pp. 1186--1222, Mar. 2011.

\bibitem{schalkwijk1967transmission}
J.~Schalkwijk and L.~Bluestein, ``Transmission of analog waveforms through
  channels with feedback,'' \emph{IEEE Transactions on Information Theory},
  vol.~13, no.~4, pp. 617--619, Oct. 1967.

\bibitem{polyanskiy2010channel}
Y.~Polyanskiy, H.~V. Poor, and S.~Verd{\'u}, ``{Channel coding rate in finite
  blocklength regime},'' \emph{IEEE Transactions on Information Theory},
  vol.~56, no.~5, pp. 2307--2359, May 2010.

\bibitem{kostina2011fixed}
V.~Kostina and S.~Verd\'{u}, ``Fixed-length lossy compression in the finite
  blocklength regime,'' \emph{IEEE Transactions on Information Theory},
  vol.~58, no.~6, pp. 3309--3338, June 2012.

\bibitem{FV87}
N.~Farvardin and V.~Vaishampayan, ``Optimal quantizer design for noisy
  channels: An approach to combined source-channel coding,'' \emph{IEEE
  Transactions on Information Theory}, vol.~33, no.~6, pp. 827--838, Nov. 1987.

\bibitem{AS93}
A.~Amanullah and M.~Salehi, ``Joint source-channel coding in the presence of
  feedback,'' in \emph{Proc. 27th Asilomar Conf. Sig. Syst. Comp.}, Pacific
  Grove, CA, Nov. 1993, pp. 930--934.

\bibitem{GC11}
S.~Gorantla and T.~Coleman, ``Information-theoretic viewpoints on optimal
  causal coding-decoding problems,'' \emph{arXiv preprint arXiv:1102.0250},
  2011.

\bibitem{hochwald1997tradeoff}
B.~Hochwald and K.~Zeger, ``Tradeoff between source and channel coding,''
  \emph{IEEE Transactions on Information Theory}, vol.~43, no.~5, pp.
  1412--1424, Sep. 1997.

\bibitem{jain2012energy}
A.~Jain, D.~Gunduz, S.~R. Kulkarni, H.~V. Poor, and S.~Verd{\'u},
  ``Energy-distortion tradeoffs in {G}aussian joint source-channel coding
  problems,'' \emph{IEEE Transactions on Information Theory}, vol.~58, no.~5,
  pp. 3153--3168, May 2012.

\bibitem{polyanskiy2011minimumenergy}
Y.~Polyanskiy, H.~V. Poor, and S.~Verd\'{u}, ``Minimum energy to send $k$ bits
  through the {G}aussian channel with and without feedback,'' \emph{IEEE
  Transactions on Information Theory}, vol.~57, no.~8, pp. 4880--4902, May
  2011.

\bibitem{VS09}
S.~Verd\'u and S.~Shamai, ``Variable-rate channel capacity,'' \emph{IEEE
  Transactions on Information Theory}, vol.~56, no.~6, pp. 2651--2667, June
  2010.

\bibitem{TT06}
A.~Tchamkerten and E.~Telatar, ``Variable length coding over an unknown
  channel,'' \emph{{IEEE} Trans. Inf. Theory}, vol.~52, no.~5, pp. 2126--2145,
  May 2006.

\bibitem{NJ13}
M.~Naghshvar and T.~Javidi, ``Active sequential hypothesis testing,'' \emph{The
  Annals of Statistics}, vol.~41, no.~6, pp. 2703--2738, 2013.

\bibitem{williams1991probability}
D.~Williams, \emph{Probability with Martingales}.\hskip 1em plus 0.5em minus
  0.4em\relax Cambridge University Press, 1991.

\bibitem{wyner1972upper}
A.~Wyner, ``An upper bound on the entropy series,'' \emph{Information and
  Control}, vol.~20, no.~2, pp. 176--181, Mar. 1972.

\bibitem{posner1967epsilonentropy}
E.~C. Posner, E.~R. Rodemich, and H.~Rumsey,
  ``\BIBforeignlanguage{English}{Epsilon-entropy of stochastic processes},''
  \emph{\BIBforeignlanguage{English}{The Annals of Mathematical Statistics}},
  vol.~38, no.~4, pp. 1000--1020, Aug. 1967.

\bibitem{pilc1967transmission}
R.~Pilc, ``The transmission distortion of a discrete source as a function of
  the encoding block length,'' \emph{Bell Syst. Tech. J.}, vol.~47, no.~6, pp.
  827--885, July/August 1968.

\bibitem{yang1999redundancy}
E.~Yang and Z.~Zhang, ``{On the redundancy of lossy source coding with abstract
  alphabets},'' \emph{IEEE Transactions on Information Theory}, vol.~45, no.~4,
  pp. 1092--1110, May 1999.

\bibitem{cruise1967achievement}
T.~J. Cruise, ``Achievement of rate-distortion bound over additive white noise
  channel utilizing a noiseless feedback channel,'' \emph{Proceedings of the
  IEEE}, vol.~55, no.~4, pp. 583--584, Apr. 1967.

\bibitem{shannon1949communication}
C.~E. Shannon, ``Communication in the presence of noise,'' \emph{Proceedings of
  the IRE}, vol.~37, no.~1, pp. 10--21, Jan. 1949.

\bibitem{wozencraft1965principles}
J.~M. Wozencraft and I.~M. Jacobs, \emph{Principles of Communication
  Engineering}.\hskip 1em plus 0.5em minus 0.4em\relax John Wiley \& Sons,
  1965.

\bibitem{pyatoshin1968some}
Y.~P. Pyatoshin, ``Some properties of $m$-ary communication systems with
  coding,'' \emph{Problemy Peredachi Informatsii}, vol.~4, no.~1, pp. 45--51,
  1968.

\bibitem{verdu2002spectral}
S.~Verd\'{u}, ``Spectral efficiency in the wideband regime,'' \emph{IEEE Trans.
  Information Theory, Special Issue on Shannon Theory: Perspective, Trends and
  Applications}, vol.~48, no.~6, pp. 1319--1343, June 2002.

\bibitem{yang2016minimum}
W.~Yang, G.~Durisi, and Y.~Polyanskiy, ``Minimum energy to send $k$ bits over
  multiple-antenna fading channels,'' \emph{IEEE Trans. Information Theory},
  vol.~62, no.~12, pp. 6831--6853, Dec. 2016.

\end{thebibliography}

\begin{IEEEbiographynophoto}{Victoria Kostina}(S'12--M'14)
joined Caltech as an Assistant Professor of Electrical Engineering in the fall of 2014. She holds a Bachelor's degree from Moscow institute of Physics and Technology (2004), where she was affiliated with the Institute for Information Transmission Problems of the Russian Academy of Sciences, a Master's degree from University of Ottawa (2006), and a PhD from Princeton University (2013). Her PhD dissertation on information-theoretic limits of lossy data compression received Princeton Electrical Engineering Best Dissertation award.  She is also a recipient of Simons-Berkeley research fellowship (2015). Victoria Kostina's research spans information theory, coding, wireless communications and control. 
\end{IEEEbiographynophoto}

\begin{IEEEbiographynophoto}{Yury Polyanskiy}(S'08--M'10--SM'14) is an 
Associate Professor of Electrical Engineering and Computer Science and a member of LIDS at MIT.
Yury received M.S. degree in applied mathematics and physics from the Moscow Institute of Physics and Technology,
Moscow, Russia in 2005 and Ph.D. degree in electrical engineering from Princeton
University, Princeton, NJ in 2010. Currently, his research focuses on basic questions in information theory, error-correcting codes, wireless communication and fault-tolerant and defect-tolerant circuits.
Yury won the 2013 NSF CAREER award and 2011 IEEE Information Theory Society Paper Award.
\end{IEEEbiographynophoto}

\begin{IEEEbiographynophoto}{Sergio Verd\'{u}}(S'80--M'84--SM'88--F'93)
 received the Telecommunications Engineering degree from the
Universitat Polit\`{e}cnica de Barcelona in 1980, and the Ph.D. degree in Electrical Engineering from the
University of Illinois at Urbana-Champaign in 1984. Since then, he has been a member of the faculty of
Princeton University, where he is the Eugene Higgins Professor of Electrical Engineering, and is a member
of the Program in Applied and Computational Mathematics.

Sergio Verd\'{u} is  the recipient of 
the 2007 Claude E. Shannon Award, and
the 2008 IEEE Richard W. Hamming Medal. 
He is a member of both the National Academy of Engineering and the National Academy of Sciences. 
In 2016, Verd\'{u} received the National Academy of Sciences Award for Scientific Reviewing.

Verd\'{u} is a recipient of several paper awards from the IEEE: 
the 1992 Donald Fink Paper Award, 
the 1998 and 2012 Information Theory  Paper Awards, 
an Information Theory Golden Jubilee Paper Award,
the 2002 Leonard Abraham Prize Award,  
the 2006 Joint Communications/Information Theory Paper Award, 
and the 2009 Stephen O. Rice Prize from the IEEE Communications Society.  
In 1998, Cambridge University Press published his book {\em Multiuser Detection,} 
for which he received the 2000 Frederick E. Terman Award from the American Society for Engineering Education. 
He was awarded a Doctorate Honoris Causa from the Universitat  Polit\`{e}cnica de Catalunya in 2005, and was elected corresponding member of the Real Academia de Ingenier\'{i}a of Spain in 2013.

Sergio Verd\'{u} served as President of the IEEE Information Theory Society in 1997, and
on its Board of Governors (1988-1999, 2009-2014).
He has also served in various editorial capacities for the {\em IEEE Transactions on Information Theory}:
Associate Editor (Shannon Theory, 1990-1993; Book Reviews, 2002-2006),  
Guest Editor of the Special Fiftieth Anniversary Commemorative Issue
(published by IEEE Press as ``Information Theory: Fifty years of discovery"), 
and member of the Executive Editorial Board (2010-2013).
He co-chaired the  Europe-United States {\em Frontiers of Engineering} program, of
the National Academy of Engineering during 2009-2013.
He is the founding Editor-in-Chief of {\em Foundations and Trends in Communications and Information Theory}. 
Verd\'{u} served as co-chair of the 2000 and 2016 {\em IEEE International Symposium on Information Theory}.

Sergio Verd\'{u} has held visiting appointments at 
the Australian National University, 
the Technion-Israel Institute of Technology, 
the University of Tokyo, 
the University of California, Berkeley, 
the Mathematical Sciences Research Institute, Berkeley, 
Stanford University, 
and the Massachusetts Institute of Technology.
\end{IEEEbiographynophoto}

\end{document}